\newcommand{\version}[2]{%
\iftoggle{Arxiv}{%
#1
}
{#2}
}%
\acrodef{CSI}{channel state information}
\acrodef{CRN}{chemical reaction network}
\acrodef{GRN}{gene regulatory network}
\acrodef{BM}{Boltzmann machine}
\acrodef{MC}{molecular communication}
\acrodef{IoBNT}{Internet of Bio-Nano Things}
\acrodef{ML}{maximum-likelihood}
\acrodef{MAP}{maximum-a-posteriori}
\acrodef{ISI}{inter-symbol interference}
\acrodef{BER}{bit error rate}
\acrodef{CTMC}{continuous-time Markov Chain}
\acrodef{SNR}{signal-to-noise ratio}
\acrodef{RV}{random variable}
\acrodef{CSK}{concentration shift keying}
\acrodef{MoSK}{molecular shift keying}
\acrodef{BCSK}{binary concentration shift keying}
\acrodef{FVBM}{fully visible Boltzmann machine}
\acrodef{iid}[i.i.d.]{independent and identically distributed}
\acrodef{wert}[w.r.t.]{with regard to}
\acrodef{SCRN}{stochastic chemical reaction network}
\acrodef{iff}{if and only if}
\acrodef{EM}{electro-magnetic}
\acrodef{RX}{receiver}
\acrodef{TX}{transmitter}
\acrodef{AMV}{approximate majority voting}
\acrodef{MNRM}{modified next reaction method}
\acrodef{wrt}[w.r.t.]{with respect to}
\acrodef{PMES}[PM-ES]{pilot mode external signal}
\acrodef{DMES}[DM-ES]{data mode external signal}
\acrodef{STES}[ST-ES]{symbol trigger external signal}
\acrodef{DTMC}{discrete-time Markov chain}
\renewcommand{\vec}[1]{\ensuremath{\bm{\mathrm{#1}}}}
\newcommand{\Rvec}[1]{\ensuremath{\underline{\bm{\mathrm{#1}}}}}
\newcommand{\mol}[1]{\ensuremath{\mathrm{#1}}}
\newcommand{\xhat}{\hat{x}}
\newcommand{\xhatmap}{\hat{x}^{\mathrm{MAP}}}
\newcommand{\numap}{\nu^{\mathrm{MAP}}}
\newcommand{\Xhat}{\hat{X}}
\newcommand{\y}{\mathbf{y}}
\newcommand{\Y}{\mathbf{\underline{Y}}}
\newcommand{\Yl}{\underline{\bm{\mathrm{Y}}}[l]}
\newcommand{\Yli}{\left[\underline{\bm{\mathrm{Y}}}[l]\right]_i}
\newcommand{\z}{\mathbf{z}}
\newcommand{\Z}{\mathbf{\underline{Z}}}
\newcommand{\partition}{\mathcal{Z}}
\newcommand{\expectation}[2]{\mathbb{E}_{#1}\{#2\}}
\newcommand{\Var}[1]{\mathrm{Var}\{#1\}}
\newcommand{\XtrueOn}{\mathrm{X^{ON}_{true}}}
\newcommand{\XtrueOff}{\mathrm{X^{OFF}_{true}}}
\newcommand{\Xhaton}{\mathrm{\hat{X}^{ON}}}
\newcommand{\Xhatoff}{\mathrm{\hat{X}^{OFF}}}
\newcommand{\XhatCon}{\mathrm{\hat{X}_{ON}^C}}
\newcommand{\XhatCoff}{\mathrm{\hat{X}_{OFF}^C}}
\newcommand{\Yon}[1]{\mathrm{Y}^{\mathrm{ON}}_{#1}}
\newcommand{\Yoff}[1]{\mathrm{Y}^{\mathrm{OFF}}_{#1}}
\newcommand{\Zon}[1]{\mathrm{Z}^{\mathrm{ON}}_{#1}}
\newcommand{\Zoff}[1]{\mathrm{Z}^{\mathrm{OFF}}_{#1}}
\newcommand{\Won}{\mathrm{W}}
\newcommand{\BER}{\mathrm{BER}}
\newcommand{\MAPthreshold}{\nu^{\mathrm{MAP}}}
\newcommand{\nr}{n_{\mathrm{r}}}
\newcommand{\nrb}{n_{\mathrm{r,b}}}
\newcommand{\Nrb}{N_{\mathrm{r,b}}}
\newcommand{\YN}{\mathrm{Y}}
\newcommand{\nw}{n_{\mathrm{W}}}
\newcommand{\Nwa}{N_{\mathrm{W}}}
\newcommand{\nwa}{n_{\mathrm{W}}}
\newcommand{\wxy}{\mathrm{w}}
\newcommand{\cx}{c[l]}
\newcommand{\Deltac}{\Delta[l]}
\newcommand{\rrc}[1]{k_{\mathrm{#1}}} %
\newcommand{\xtrue}{ x_{ \mathrm{true} } }
\newcommand{\meter}{\mathrm{m}}
\newcommand{\second}{\mathrm{s}}
\newcommand{\crn}[1]{\mathcal{C}_{\mathrm{#1}}}
\newcommand{\kvec}{\mathbf{k}}
\newcommand{\transpose}{\mathrm{\scriptscriptstyle T}} %
\newcommand{\W}{\mathbf{W}}
\newcommand{\thetavec}{\boldsymbol{\theta}}
\newcommand{\V}{\mathbf{V}}
\newcommand{\kon}{k_{\mathrm{on}}}
\newcommand{\koff}{k_{\mathrm{off}}}
\newtheorem{theorem}{Theorem}
\newtheorem{definition}{Definition}
\newcommand{\propensity}{\rho}
\newcommand{\Amol}{\mathrm{A}}
\newcommand{\Bon}{\mathrm{B^{ON}}}
\newcommand{\Boff}{\mathrm{B^{OFF}}}
\newcommand{\Ta}{\tau_{\mathrm{a}}}
\newcommand{\Tsym}{\tau_{\mathrm{sym}}}
\newcommand{\Ton}{\mathrm{T^{ON}}}
\newcommand{\Toff}{\mathrm{T^{OFF}}}
\newcommand{\Son}{\mathrm{S^{ON}}}
\newcommand{\Sb}{\mathrm{S^{B}}}
\newcommand{\Soff}{\mathrm{S^{OFF}}}
\newcommand{\Ptrain}{\mathrm{P^{pilot}}}
\newcommand{\Ptransmit}{\mathrm{P^{data}}}
\newcommand{\Roff}{\mathrm{R^{OFF}}}
\newcommand{\Ron}{\mathrm{R^{ON}}}
\newcommand{\Rb}{\mathrm{R^{B}}}
\newcommand{\Hmol}{\mathrm{H}}
\newcommand{\Aoff}{\mathrm{A^{OFF}}}
\newcommand{\Aon}{\mathrm{A^{ON}}}
\newcommand{\Doff}{\mathrm{D^{OFF}}}
\newcommand{\Don}{\mathrm{D^{ON}}}
\newcommand{\Db}{\mathrm{D^{B}}}
\newcommand{\Loff}{\mathrm{L^{OFF}}}
\newcommand{\Lon}{\mathrm{L^{ON}}}
\newcommand{\Ioff}{\mathrm{I^{OFF}}}
\newcommand{\Ion}{\mathrm{I^{ON}}}
\newcommand{\transmatrix}{\mathbf{P}}
\newcommand{\pdistr}{\boldsymbol{\pi}}
\newcommand{\pdistrsteadystate}{\boldsymbol{\pi}^{\infty}}
\newcommand{\nwtotal}{n_{\mathrm{W}}^{\mathrm{total}}}
\newcommand{\uc}{\frac{\mathrm{molecules}}{\si{\cubic\metre}}}
\newcommand{\ponrel}{p_{\mathrm{rel,on}}}
\newcommand{\nsamples}{n_{\mathrm{t}}}
\newlist{consideration}{enumerate}{1}
\setlist[consideration]{label=\textbf{(C\arabic*)}, ref=\textbf{C\arabic*}}
\newlist{scenario}{enumerate}{1}
\setlist[scenario]{label=\textbf{(S\arabic*)}, ref=\textbf{S\arabic*}}
\long\def\@makecaption#1#2{\ifx\@captype\@IEEEtablestring%
    \footnotesize\begin{center}{\normalfont\footnotesize #1}\\
        {\normalfont\footnotesize\scshape #2}\end{center}%
    \@IEEEtablecaptionsepspace
    \else
    \@IEEEfigurecaptionsepspace
    \setbox\@tempboxa\hbox{\normalfont\footnotesize {#1.}~~ #2}%
    \ifdim \wd\@tempboxa >\hsize%
    \setbox\@tempboxa\hbox{\normalfont\footnotesize {#1.}~~ }%
    \parbox[t]{\hsize}{\normalfont\footnotesize \noindent\unhbox\@tempboxa#2}%
    \else
    \hbox to\hsize{\normalfont\footnotesize\hfil\box\@tempboxa\hfil}\fi\fi}
\renewcommand{\baselinestretch}{1.2}
\begin{document}
\bstctlcite{IEEEexample:BSTcontrol}

\title{\vspace*{-10mm}Closing the Implementation Gap in MC:\\Fully Chemical Synchronization and Detection for Cellular Receivers
\vspace*{-0.5cm}
}
\author{Bastian~Heinlein,
        Lukas~Brand,
        Malcolm~Egan,
        Maximilian~Schäfer,\\
        Robert~Schober,
        and~Sebastian~Lotter%
\thanks{Bastian Heinlein, Lukas Brand, Maximilian Schäfer, Robert Schober, and Sebastian Lotter are with Friedrich-Alexander-Universität Erlangen-Nürnberg, Erlangen, Germany. Malcolm Egan is with Univ. Lyon, Inria, INSA Lyon, CITI, Villeurbanne, France.}
\thanks{This paper was presented in part at the ACM Nanoscale Computing and Communications Conference, 2023 \cite{bahe_nanocom}.}
}

\markboth{}%
{Heinlein \MakeLowercase{\textit{et al.}}: Designing a Cellular Receiver for Molecular Communications}
\maketitle
\vspace*{-1.6cm}
\begin{abstract}
\vspace*{-0.2cm}
In the context of the \ac{IoBNT}, nano-devices are envisioned to perform complex tasks collaboratively, i.e., by communicating with each other. One candidate for the implementation of such devices are engineered cells due to their inherent biocompatibility. However, because each engineered cell has only little computational capabilities, transmitter and \ac{RX} functionalities can afford only limited complexity.
In this paper, we propose a simple, yet modular, architecture for a cellular \ac{RX} that is capable of processing a stream of observed symbols using chemical reaction networks. Furthermore, we propose two specific detector implementations for the \ac{RX}. The first detector is based on a machine learning model that is trained {\em offline}, i.e., before the cellular \ac{RX} is deployed. The second detector utilizes pilot symbol-based training and is therefore able to continuously adapt to changing channel conditions {\em online}, i.e., after deployment.
To coordinate the different chemical processing steps involved in symbol detection, the proposed cellular \ac{RX} leverages an internal chemical timer.
Furthermore, the \ac{RX} is synchronized with the transmitter via external, i.e., extracellular, signals.
Finally, the proposed architecture is validated using theoretical analysis and stochastic simulations. The presented results confirm the feasibility of both proposed implementations and reveal that the proposed online learning-based \ac{RX} is able to perform reliable detection even in initially unknown or slowly changing channels.
By its modular design and exclusively chemical implementation, the proposed \ac{RX} contributes towards the realization of versatile and biocompatible nano-scale communication networks for \ac{IoBNT} applications narrowing the existing implementation gap in cellular \acf{MC}.

\vspace*{-0.4cm}
\end{abstract}
\acresetall

\section{Introduction}
\label{sec:introduction}
\vspace*{-3mm}
In contrast to conventional communication based on electromagnetic waves, \ac{MC} concerns the transmission of information using molecules.
As such, \ac{MC} is expected to facilitate a variety of applications that require communication among biological nanomachines such as lab-on-a-chip, environmental monitoring, and improved detection and treatment of diseases \cite{nakano_mc_opportunities_applications, akyildiz_panacea_infectious_diseases}.
In many of these applications, synthetic, i.e., engineered, cells acting as sensors and actuators, are envisioned to be integrated within the \ac{IoBNT} to collaboratively solve complex tasks \cite{haselmayr_iobnt}. 

Collaboration requires communication. To enable this communication, numerous \ac{RX} and \ac{TX} designs have been proposed in the \ac{MC} literature and several works studied the communication performance of optimal and sub-optimal \acp{RX} (e.g., \cite{noel_optimal_RX, li_clock_free,li_low_complexity_non_coherent_RX}). In some scenarios, the communication channel is not known before the deployment of the communication devices or time-variant, e.g., due to \ac{RX} or \ac{TX} mobility \cite{ahmadzadeh_mobile_mc} or changes in the level of background noise arising from the activities of other \acp{TX}. In these cases, non-coherent or adaptive detection schemes are required \cite{jia2019_concentration_difference,damrath2016_adaptive_threshold_Detection,alshammri_adaptive_neuro_fuzzy_receivers}. 
All the aforementioned existing \ac{RX} designs rely on the support of some, either external or not further specified, computation unit that is able to execute the mathematical functions required for symbol detection. However, to unleash the full potential of \ac{MC}, e.g., for \ac{IoBNT} applications, the required computations need to be performed locally, e.g., inside the synthetic cells acting as the \ac{MC} \acp{RX}. So far, there exists a large implementation gap for such cellular \ac{MC} systems that results from a severe mismatch between the computational requirements of the theoretical \ac{RX} designs proposed by communication engineers and the capabilities of synthetic cellular \acp{RX} that were realized in existing testbeds \cite{experimental_research}.

Several research efforts to narrow this gap have been reported in the literature.
Bio-inspired, adaptive \acp{RX} that utilize adaptation strategies as found in natural cells, such as the tuning of ligand-receptor response curves, were proposed in \cite{kuscu_adaptive_receiver_preprint}. On the other hand, a large body of (theoretical) literature investigated the possible realization of specific computational functions in nanomachines. 
The related literature can be divided into two main groups \cite{kuscu_proceedings_review}.
The first group is based on nanomaterials. In \cite{kuscu_biofets}, the use of nanoscale field effect transistor-based electrical biosensors (bioFETs) as \acp{RX} was proposed. However, their temperature sensitivity \cite{kuscu_sinw_fet} and inability to detect electrically neutral ligands \cite{aktas_mechanical_RX} make the usage of bioFETs in in-body applications challenging. Related approaches try to alleviate some of these issues, e.g., flexure field-effect transistors facilitate the detection of electrically neutral ligands \cite{aktas_mechanical_RX}. 
The second group is based on biological cells and can be further divided into \ac{GRN}-based and \ac{CRN}-based implementations. \acp{GRN} rely on the expression of genes, i.e., the production of proteins: Since produced proteins can in turn promote or suppress the expression of genes, a complex control system can be created and computational units like switches can be realized \cite{gardner_genetic_toggle_switch}. Because \acp{GRN} can be designed using tools from synthetic biology, they present a promising approach to perform computations in engineered cells, e.g., to implement the functionalities required for an \ac{RX}. In the \ac{MC} literature, several \ac{GRN}-based \acp{RX} were proposed for detection \cite{amerizadeh_bacterial_rx,unluturk_GRN_biotransceivers,durmaz_engineering_genetic_cicuits_rx,jadsadaphongphaiboo_bcsk_rx_tx_model} and error-correction coding \cite{yu2022convolutional,marcone_parity_check_GRNs}. Furthermore, a system-theoretic characterization of \ac{GRN} models was provided in \cite{pierobon_molecular_system_model_grns,pierobon_systems_theoretical_model_grns}. However, even though it is in principle possible to equip biological cells with pre-designed \acp{GRN}, this approach has limitations. First, \acp{GRN} operate rather slowly and even the execution of simple computations can take several hours \cite{gardner_genetic_toggle_switch}. Second, the \ac{GRN} designs proposed in the \ac{MC} literature are complex as they typically rely on the combination of several digital functions. Thus, their practical realization is in general very challenging \cite{purcell_synthetic_grns,wang_riboswitches}.

Similar to \acp{GRN}, \acp{CRN} usually rely on proteins to perform computations when used in cells. However, \acp{CRN} employ already produced proteins and, hence, do not require the synthesis of new proteins\footnote{For this reason we consider \acp{CRN} and \acp{GRN} as different concepts in this work, even though \acp{GRN} can be technically interpreted as a special case of \acp{CRN}, where the chemical reactions involved in protein synthesis are most prevalent.}. In nature, \acp{CRN} perform computations, for example, through the phosphorylation and de-phosphorylation of certain proteins, e.g., in bacterial chemotaxis \cite{wadhwa_nature_reviews_microbio_chemotaxis}. Recent progress in the engineering of such \textit{post-translational} circuits \cite{chen_programmable_protein_circuits} renders \acp{CRN} a viable option for the implementation of biological \acp{RX} that can realize higher computation speeds compared to \ac{GRN}-based \acp{RX}.
However, the practical feasibility of \ac{CRN}-based cellular \acp{RX} hinges on the following practical constraints:
\begin{consideration}
    \item \Ac{RX} design in \ac{MC} can in general not rely on the existence of mathematically tractable \textbf{channel models}, since these are notoriously difficult to obtain for many \ac{MC} channels, e.g., due to complex environment geometries or unknown physical parameters. In some situations, simulation or experimental data may be available and utilized to train a coherent detector offline, in other scenarios, when the channel is unknown \textit{a priori} or time-variant, only non-coherent detection schemes are suitable. In both cases, the absence of tractable channel models poses a particular challenge for the design of cellular \ac{CRN}-based \acp{RX}, since neither data-driven nor (pilot) training-based methods have yet been explored in the context of chemical \acp{RX}.\label{con:adaptive_RX}

    \item \ac{TX} and \ac{RX} need to be \textbf{synchronized} to ensure that symbol transmission and detection happen at the right time. While this requirement exists similarly for non-cellular \ac{MC} systems, it has not been addressed yet for cellular \acp{RX}, i.e., in the absence of external computation units.\label{con:synchronization}
        
    \item To facilitate its chemical implementation, any practical \ac{CRN}-based \ac{RX} must be constrained to a \textbf{simple architecture} with only few chemical species and reactions involved. A large number of chemical species and/or reactions would prohibitively increase the implementation complexity.\label{con:practical_implementation}
\end{consideration}

Some works in the \ac{MC} literature already considered \acp{CRN} to realize single \ac{RX} functionalities. On the protocol level, the use of enzyme-based computing was studied in \cite{walsh_synthetic_protocols_enzyme_dna_based_computing}. In \cite{cao_crn_based_detection,hamdan_generalized_solution,kuscu_ml_detectors}, \acp{CRN} were proposed to process the received signal for further analysis. A \ac{CRN} making an actual decision about the received symbol was designed in \cite{egan_biological_circuits_mosk}. Further \ac{CRN}-based detector implementations were reported in \cite{chou_molecular_circuits_approx_map,riaz_map_spatially_partioned,chou2012_molecular_circuits_frequency_coded}. The use of chemical reactions in micro-fluidic \ac{MC} systems to implement \ac{RX} and \ac{TX} functionalities was proposed in \cite{bi_crn_microfluidic_tx_rx} and experimentally validated in \cite{walter_ncomms}.
However, none of the existing works in the \ac{MC} literature fulfills all the constraints \ref{con:adaptive_RX}-\ref{con:practical_implementation} at the same time.

In the conference version of this paper \cite{bahe_nanocom}, two \ac{CRN}-based detectors were proposed that fulfill constraints \ref{con:adaptive_RX} and \ref{con:practical_implementation} mentioned above.
Here, we extend the preliminary results from \cite{bahe_nanocom} towards a fully chemical \ac{RX} implementation meeting all constraints. The main contributions of this paper can be summarized as follows.
\begin{itemize}
    \item The \ac{CRN}-based detectors proposed in \cite{bahe_nanocom} are extended by a chemical thresholding mechanism that aggregates the randomly fluctuating \ac{CRN} output into a detection decision. 
    \item The adaptive detector design from \cite{bahe_nanocom} is complemented in this paper by a theoretical steady state error analysis. 
    \item In contrast to \cite{bahe_nanocom}, the \ac{RX} design proposed in this paper includes fully chemical synchronization and timing mechanisms that ensure the execution at the right time and the correct order, respectively, of all required computations at the \ac{RX}. These mechanisms in particular enable the \ac{RX} proposed in this paper to reset its chemical state after each symbol detection and seamlessly process a stream of received symbols, which was not considered in \cite{bahe_nanocom}.
    \item The adaptive \ac{RX} proposed in \cite{bahe_nanocom} is extended so it can be toggled between two operating modes, the learning and the data transmission mode.
    \item Finally, numerical results from extensive stochastic simulations confirm that the proposed \ac{CRN}-based \ac{RX} design delivers competitive performance in terms of the achieved \ac{BER} when compared to the \ac{MAP} detector.
\end{itemize}
In summary, in this work, we propose a novel \ac{CRN}-based cellular \ac{RX} architecture that may contribute towards the practical realization of cellular \acp{RX}, hence narrowing the existing implementation gap for cellular \ac{MC} systems.

The remainder of this paper is organized as follows: The system model and important preliminaries on \acp{CRN} are summarized in Section~\ref{sec:system_overview}. The proposed machine learning model-based and adaptive detectors are introduced in Sections~\ref{sec:BMbasedRX} and \ref{sec:adaptiveRX}, respectively. In Section~\ref{sec:timing}, the chemical implementation of the timing mechanism is discussed. The proposed detection algorithm and the end-to-end \ac{CRN}-based \ac{RX} implementation are validated with extensive simulations in Section~\ref{sec:simulations}. Finally, Section~\ref{sec:conclusion} summarizes the main findings and future research directions.

\textit{Notation:} Vectors and matrices of constants are denoted by lowercase and uppercase bold letters, respectively. The $i$-th entry of vector $\vec{x}$ is denoted as $\left[\vec{x}\right]_i$. The transpose of matrix $\mathbf{A}$ is denoted by $\mathbf{A}^{\transpose}$. Scalar \acp{RV} and vectors of \acp{RV} are denoted by uppercase and bold underlined uppercase letters, respectively. The expectation operator for a probability distribution $q_{\Z}(\z)$ is denoted by $\expectation{q_{\Z}}{\cdot}$. The variance of \ac{RV} $X$ is represented by $\Var{X}$. \mbox{Sets are denoted by calligraphic letters and} the cardinality of a set $\mathcal{A}$ is denoted by $|\mathcal{A}|$.  Chemical species are represented \mbox{by roman uppercase letters}.
\vspace*{-0.7cm}
\section{System Model and Preliminaries}
\vspace*{-0.2cm}
\label{sec:system_overview}
\begin{figure}
    \begin{minipage}{0.48\textwidth}
        \centering
        \vspace*{-10mm}
        \includegraphics[width=3.2in]{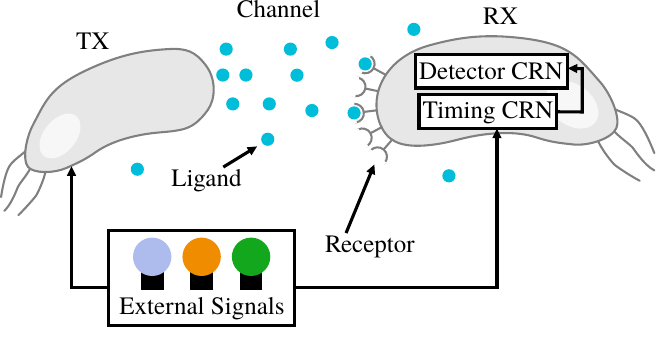}
        \vspace*{-3mm}
        \caption{\textbf{System Model.} External signals synchronize the transmission and reception of data and, optionally, pilot symbols. The \ac{RX} has an internal timing mechanism to coordinate the detection and training process prompted by the respective external signals.}
        \label{fig:system_model}
    \end{minipage}
    \hfill
    \begin{minipage}{0.48\textwidth}
        \centering
        \vspace*{-5mm}
        \includegraphics[width=3.2in]{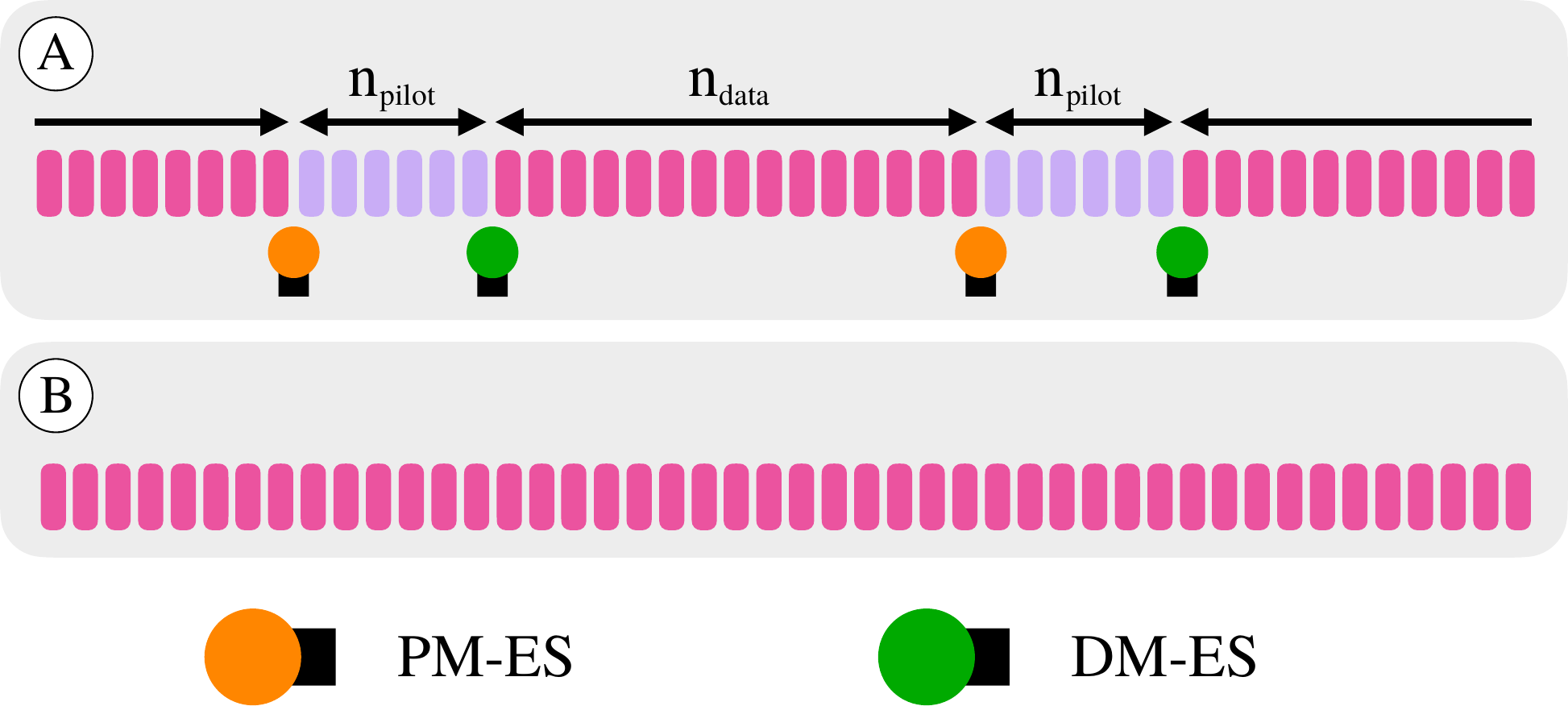}
        \vspace*{-4mm}
        \caption{\textbf{Data and Pilot Mode.} A: \Acf*{PMES} and \acf*{DMES} switch the \ac{TX} (\ac{RX}) into the \textit{pilot mode} and the \textit{data mode} for the transmission (reception) of pilot and data symbols, respectively. B: If a detector does not use pilot symbols, only data symbols are transmitted.}
    \label{fig:transmission_phases}
    \end{minipage}
    \vspace*{-1cm}
\end{figure}
In this section, we review some preliminaries on \acp{CRN} that are later used in Sections~\ref{sec:BMbasedRX}-\ref{sec:timing}, describe the system model, and provide an overview of the proposed \ac{RX} architecture.
\vspace*{-0.5cm}

\subsection{Chemical Reaction Network Basics}
\label{sec:CRNprimer}
\vspace*{-2mm}
Formally, a \ac{CRN} $\crn{}$ is defined by a tuple $\crn{} = (\mathcal{S}, \mathcal{R}, \kvec)$, where $\mathcal{S}$, $\mathcal{R}$, and $\kvec$ are the set of chemical species, the set of reactions defined over $\mathcal{S}$, and the vector of reaction rate constants, respectively. At each time $t$, the state of the \ac{CRN} is defined by the number of molecules of each species $\mol{S} \in \mathcal{S}$, denoted by $n_{\mol{S}}(t)$\footnote{The dependency of $n_{\mol{S}}(t)$ on $t$ is suppressed in the notation in the following, i.e., we simply write $n_{\mol{S}}$ instead, when it is not relevant in the particular context.}. Driven by chemical reactions, the \ac{CRN} transitions between different states and can in this way perform computations. 

Consider a minimal \ac{CRN} with $\mathcal{S}=\{\Amol, \Bon, \Boff\}$ and one chemical reaction with rate constant $\rrc{1}>0$, i.e., $\kvec=[\rrc{1}]$. A reaction and its corresponding rate constant are typically summarized by a chemical equation, as, for example, the following:
\begin{equation}
    \Amol + \Bon \xrightarrow{\rrc{1}} \Amol + \Boff.\label{eq:crn_example}
\end{equation}
In \eqref{eq:crn_example}, one molecule of type $\Amol$ reacts with one molecule of type $\Bon$ to form one molecule $\Amol$ and one molecule $\Boff$ and the reaction occurs with rate constant $\rrc{1}$.
Here, $\Amol$ acts as \textbf{catalyst}, i.e., it is required for the conversion of the \textbf{substrate} $\Bon$ to the \textbf{product} $\Boff$, but is not consumed by the reaction.
Catalytic reactions are omnipresent in nature and, for example, based on enzymes as catalyst. Also in the \ac{RX} architecture proposed in this paper, such catalytic reactions are an integral part\footnote{We also utilize reactions with more than two reacting molecules (reactants). However, such reactions can be decomposed into several simpler reactions with only two reactants.}.
Furthermore, the notations $\Bon$ and $\Boff$ are indicative of the activated and inactivated states of a molecule $\mathrm{B}$, respectively.
In nature, these states could correspond to, for example, the phosphorylated and dephosphorylated states of $\mathrm{B}$.

We assume mass action kinetics in this paper, i.e., the probability of any chemical reaction to happen is assumed to be proportional to the number of the reacting molecules \cite{mass_action_kinetics} within the reaction volume in which the \ac{CRN} operates. Then, for example, the probability $\propensity_1(t)$ of the reaction in \eqref{eq:crn_example} to occur per unit time is given by
\begin{equation}
    \propensity_1(t) = \rrc{1} \cdot n_{\Amol}(t) \cdot n_{\Bon}(t),\label{eq:mass_action_example}
\end{equation}
and we refer to \cite[Eq. (6)]{poole2017_chemical_BMs} for a more general form of \eqref{eq:mass_action_example}. 
When engineering a \ac{CRN} for a particular functionality (as it is done in this paper), a particular value of $\rho_1(t)$ may be required.
In practice, this can be achieved by tuning $\rrc{1}$ or, alternatively, by adjusting $n_{\Amol}(t)$. Also, in a \ac{CRN} with multiple reactions, the rate constants can be multiplied by any positive constant without altering the system dynamics except for the time scale. This is possible because the equations governing the state evolution of a \ac{CRN} are linear in the reaction rate constants \cite{mass_action_kinetics}. Hence, the rate constants are unique only up to a common scaling factor and we employ normalized reaction rate constants and normalized time scales in the remainder of this paper.

Now, chemical reactions can in general not only be catalyzed by enzymes as, for example, in \eqref{eq:crn_example}, but also by \textbf{external signals}, i.e., optical or chemical signals originating from outside the reaction volume.
In its simplest form, while an external signal is ON, a chemical reaction is enabled, i.e., the reaction occurs with some non-zero rate constant; when the signal is OFF, the respective reaction does not occur.
Chemical reactions involving external signals have been already described and even practically implemented in the \ac{MC} literature. One commonly used external signal in \ac{MC} testbeds is light \cite{experimental_research}, because it can easily be turned ON and OFF.
Furthermore, light sources with different wavelengths can be used to catalyze different chemical reactions. For example, \textit{green fluorescent protein variant ''Dreiklang''} is a protein that can be switched between different states by applying light of different wavelengths \cite{media_modulation}.
Light is also often used in synthetic biology, e.g., to trigger the release of cargo carried by liposomes \cite{chude_bio_cyber_interface} or to influence intra-cellular processes like gene-expression \cite{hartmann_control_gene_expression_with_light}.
Chemical external signals are, for example, well-known in the context of neuromodulation, where intracellular chemical processes in neural cells are adjusted depending on the presence of chemical substances in the extracellular matrix.

In the remainder of this paper, external signals are used for synchronizing \ac{TX} and \ac{RX}.
Formally, the impact of an external signal on a specific reaction is reflected in a \mbox{time-dependent} reaction rate constant and, since we exclusively consider external signals that are either ON or OFF, these time-dependent reaction rate constants are piecewise constant.
In terms of notation, if an external signal $\mathrm{e}$, in short ES-$\mathrm{e}$, e.g., light of a specific wavelength, enables the conversion of a molecule of type $\mathrm{B}$ from state $\Boff$ to $\Bon$ with rate constant $\rrc{e}$, we denote this in the following as
\begin{equation}\label{eq:system_model:reaction_external_catalyst}
    \Boff  \xrightarrow{\textrm{ES-}\mathrm{e}\,\, \rrc{e}} \Bon.
\end{equation}
\vspace*{-1.4cm}

\subsection{Transmitter and Channel Models}
\label{sec:system_model}
We consider \ac{MC} between a single cellular \ac{TX} and a single cellular \ac{RX} as depicted in Fig.~\ref{fig:system_model}.
To convey information, the \ac{TX} cell releases signaling molecules into the fluid environment which propagate towards the \ac{RX} cell where they are sensed by receptor proteins embedded in the \ac{RX} cell's membrane\footnote{In this paper, the proposed detector designs do not rely on specific assumptions on the physical phenomena underlying the signaling molecule propagation from the \ac{TX} cell towards the \ac{RX} cell.}.
Specifically, for time-slotted transmission in which index $l \in \mathbb{N}_0$ denotes the symbol interval, the \ac{TX} utilizes binary concentration shift keying to transmit binary symbols $X[l] \in \lbrace 0, 1 \rbrace$. In response, the \ac{RX} cell is exposed to signaling molecule concentration $C[l]$,
\vspace*{-0.1cm}
\begin{equation}
    C[l] = \Deltac X[l] + N[l],\label{eq:system_model_concentrations}
\end{equation}
where $\Deltac>0$ denotes the concentration of signaling molecules at the \ac{RX} due to the release of molecules by the \ac{TX} and $N[l] \geq 0$ denotes the concentration of background noise molecules of the same type as the signaling molecules at the \ac{RX} in symbol interval $l$.
In \eqref{eq:system_model_concentrations}, we assume that consecutive symbol intervals do not affect each other, i.e., $C[l]$ depends only on $X[l]$ (and $l$) and is independent of any $X[l']$, $l' \neq l$\footnote{This is the case, e.g., for sufficiently long symbol intervals. The required length of the symbol interval can be shortened, e.g., by releasing enzymes into the channel to degrade the signaling molecules \cite{noel_enzymes}.}.

The cellular \ac{RX} estimates the ligand concentration in its environment using $\nr$ cell surface receptors (cf. Figure~\ref{fig:system_model}), where, in line with practical considerations, each receptor is assumed to remain in one single state throughout the detection interval (i.e., either bound or unbound) \cite{bernetti_binding_kinetics_review}.
Consequently, the receptor occupancy in symbol interval $l$ is denoted by random vector $\Y[l] \in \{0,1\}^{\nr}$, where $\Yli = 1$ \ac{iff} the $i$-th receptor is bound in symbol interval $l$.
Thus, the considered \ac{MC} channel in interval $l$ is fully characterized by the joint distribution $q_{\Y[l],X[l]}(\y, x) = \Pr\left[ \Y[l]=\y, X[l]=x \right]$ of the receptor states $\Y[l]$ and transmitted symbol $X[l]$.

For any practical \ac{MC} system, the channel parameters $\Deltac$ and $N[l]$ as well as $q_{\Y[l],X[l]}$ are difficult if not impossible to obtain and in fact the \ac{RX} proposed in this paper does not rely on knowing either one of them.
However, in the special case that all $\nr$ receptors are chemically identical and the number of ligands is large as compared to the number of receptors, since we assumed that all receptors are exposed to the same ligand concentration $C[l]$, the receptor occupancies $\Yli$ are \ac{iid}. Hence, the probability of a receptor to be occupied can be derived as \cite{unluturk_GRN_biotransceivers}
\vspace*{-0.1cm}
\begin{equation}
    \Pr\left[\Yli=1|X[l]\right] = \frac{C[l]}{C[l]+\frac{\rrc{-}}{\rrc{+}}},\quad \forall i=1,\ldots,\nr,\label{eq:receptor_binding_probability}
\end{equation}
where $\rrc{+}$ and $\rrc{-}$ denote the receptor binding and unbinding rate constants, respectively.

\noindent Although the \ac{RX} design proposed in this paper is not limited to the cases when \eqref{eq:receptor_binding_probability} applies, we use \eqref{eq:receptor_binding_probability} to generate the receptor occupancy realizations in the computer simulations presented Section~\ref{sec:simulations}.

In the following sections, we often suppress the dependency of $\Yl$ and $X[l]$ on $l$ in the notation and write $\Rvec{Y}$ and $X$ instead, where it is understood that the respective considerations for $\Rvec{Y}$ and $X$ hold for any $\Yl$ and $X[l]$.

\subsection{CRN-based MAP Detection}\label{sec:system_model:crn-based_detection}

\subsubsection{Direct Realization}

The computational substrate of all computing units of the \ac{RX} proposed in this paper are molecule species that chemically interact with each other; consequently, the inputs and outputs of each \ac{CRN}-based computation are represented by corresponding molecule types and counts.
Now, as we have discussed in Section~\ref{sec:CRNprimer}, chemical reactions are probabilistic events, in particular if molecule counts are low, such as in \acp{CRN} operating within single cells.
Let $\mathcal{A} = \left\lbrace \mathrm{A}_1,\ldots,\mathrm{A}_{|\mathcal{A}|} \right\rbrace \subset \mathcal{S}$ and $\mathcal{B} = \left\lbrace \mathrm{B}_1,\ldots,\mathrm{B}_{|\mathcal{B}|} \right\rbrace \subset \mathcal{S}$ denote the input and output species of some computation executed by \ac{CRN} $\mathcal{C}=\left(\mathcal{S}, \mathcal{R}, \bm{\mathrm{k}}\right)$.
Then, the time-dependent vectors of species counts $\Rvec{N}_{\mathcal{A}}(t) = [N_{\mathrm{A}_1}(t),\ldots,N_{\mathrm{A}_{|\mathcal{A}|}}(t)]$ and $\Rvec{N}_{\mathcal{B}}(t) = [N_{\mathrm{B}_1}(t),\ldots,N_{\mathrm{B}_{|\mathcal{B}|}}(t)]$ are discrete-valued random processes and the computation executed by $\mathcal{C}$ is probabilistic, i.e., expressed as a conditional probability distribution that we denote as $q_{\Rvec{N}_{\mathcal{B}}(t)|\Rvec{N}_{\mathcal{A}}(0)}(\vec{n}_{\mathcal{B}}(t)|\vec{n}_{\mathcal{B}}(0),\vec{n}_{\mathcal{A}}(0))$ \cite{virinchi_cell_infer_environment}\footnote{We will always assume deterministic initial conditions for the species in $\mathcal{B}$, i.e., $\bm{n}_{\mathcal{B}}(0)$ is fixed and, hence, suppressed in the notation in the following.}.
The idea behind \ac{CRN}-based \ac{MAP} detection is to design $\mathcal{C}$ in such a way that $q_{\Rvec{N}_{\mathcal{B}}(t)|\Rvec{N}_{\mathcal{A}}(0)}$ generates (approximately) \ac{iid} samples $\Rvec{N}_{\mathcal{B}}(t_s)$, $s = 1,\ldots,S$, according to the posterior distribution $\Pr\left[ X | \Y \right]$, where $S$ denotes the number of samples and we recall from the previous section that $\Y$ remains fixed during the entire detection process of a single symbol.
Here, $\mathcal{C}$ is said to generate samples from  $\Pr\left[ X | \Y \right]$ if there exist deterministic mappings $f$, $g$ that map each realization of random vector $\Y$, $\vec{y}$, to some input species count vector $\bm{n}_{\mathcal{A}}$, i.e., $f:\lbrace 0, 1 \rbrace^{\nr} \mapsto \mathbb{N}_0^{|\mathcal{A}|}$, and each possible realization of \ac{RV} $X$, $x$, to some output species count vector $\bm{n}_{\mathcal{B}}$, i.e., $g: \left\lbrace 0, 1 \right\rbrace \mapsto \mathbb{N}_0^{|\mathcal{B}|}$, respectively, and, for each $t_s$,
\begin{equation}
    q_{\underline{\bm{\mathrm{N}}}_{\mathcal{B}}(t_s)|\underline{\bm{\mathrm{N}}}_{\mathcal{A}}(0)}(\bm{n}_{\mathcal{B}}(t_s)| \bm{n}_{\mathcal{A}}(0) ) = q_{\underline{\bm{\mathrm{N}}}_{\mathcal{B}}(t_s)|\underline{\bm{\mathrm{N}}}_{\mathcal{A}}(0)}(g(x)|f(\vec{y})) = \Pr\left[ X=x | \Y=\vec{y} \right].
\end{equation}
Furthermore, $g$ needs to be bijective, i.e., $g^{-1}(\bm{n}_{\mathcal{B}}(t_s)) \in \lbrace 0, 1 \rbrace$ for each $\vec{n}_{\mathcal{B}}(t_s)$ with $\Pr\left[\vec{n}_{\mathcal{B}}(t_s)\right] > 0$, where $g^{-1}$ denotes the inverse mapping to $g$.
Then, the samples generated by $\mathcal{C}$, i.e., $\Rvec{N}_{\mathcal{B}}(t_1),\ldots,\Rvec{N}_{\mathcal{B}}(t_S)$, are statistically equivalent to \ac{iid} samples $X_1=g^{-1}(\Rvec{N}_{\mathcal{B}}(t_1)),\dots,X_S=g^{-1}(\Rvec{N}_{\mathcal{B}}(t_S))$ from the posterior and the \ac{MAP} estimate of $x$ given $\vec{y}$ is obtained from \ac{CRN} $\mathcal{C}$ as
\begin{equation}\label{eq:system_model:sample_averaging}
    \xhatmap = \begin{cases}
        1 & \text{if } \lim\limits_{S \to \infty} \frac{1}{S}\sum_{s=1}^S g^{-1}\left(\vec{n}_{\mathcal{B}}(t_s)\right) = \lim\limits_{S \to \infty} \frac{1}{S}\sum_{s=1}^S X_s \geq \frac{1}{2}, \\
        0 & \text{otherwise},
    \end{cases}
\end{equation}
due to the law of large numbers.
In short, for \ac{CRN}-based \ac{MAP} detection, $\mathcal{C}$ is designed to generate \ac{iid} samples from $\Pr\left[X|\Rvec{Y}\right]$, such that in the limit of a large number of samples the empirical mean of the samples converges to $\Pr\left[X=1|\Rvec{Y}\right]$.

\subsubsection{Equivalent Realization}

Rather than implementing the posterior $\Pr\left[X|\Rvec{Y}\right]$ directly as discussed in the previous section, the two \ac{CRN}-based detectors that are introduced in Sections~\ref{sec:BMbasedRX} and \ref{sec:adaptiveRX} make use of two observations. First, we observe from \eqref{eq:system_model:sample_averaging} that it is not necessary to perfectly know $\Pr[X|\Rvec{Y}]$ to make \ac{MAP} decisions. Instead, any probability mass function $p_{\Xhat|\Rvec{Y}}(x\,|\,\vec{y})$ defining a surrogate \ac{RV} $\Xhat$ can be employed for \ac{MAP} detection if it fulfills the condition of the following definition. 
\begin{definition}[MAP Property]\label{def:system_model:map_property}
Systems that generate samples of any \ac{RV} $\Xhat$ for which 
\begin{equation}
    p_{\Xhat|\Rvec{Y}}(x\,|\,\vec{y}) \geq \frac{1}{2} \textrm{ iff } \Pr[X=x\,|\,\Rvec{Y}=\vec{y}] \geq \frac{1}{2}, \quad \forall\, x \in \lbrace 0, 1 \rbrace, \y \in  \lbrace 0, 1 \rbrace^{\nr},
\end{equation}
possess the {\em MAP property}.
\vspace*{-0.3cm}
\end{definition}

Secondly, for \acp{CRN} that possess the \ac{MAP} property and have only two output species, i.e., $|\mathcal{B}|=2$, for which exactly one molecule of either one or the other species exists for all $t>0$, i.e., $n_{\mol{B}_1}(t) + n_{\mol{B}_2}(t) = 1$, $x=1$ ($x=0$) can be associated with the presence (absence) of a single molecule type, say $\mol{B}_1$.
In that case, \eqref{eq:system_model:sample_averaging} reduces to counting observations of $\mol{B}_1$, i.e.,
\begin{equation}\label{eq:system_model:sampling}%
    \xhat^{\mathrm{MAP}} = \begin{cases}
        1 & \text{if } \lim\limits_{S \to \infty} \frac{1}{S}\sum_{s=1}^S n_{\mol{B}_1}(t_s) \geq \frac{1}{2}, \\
        0 & \text{otherwise}.
    \end{cases}
\end{equation}
Motivated by this second observation and since it will prove useful in Sections~\ref{sec:BMbasedRX} and \ref{sec:adaptiveRX}, we identify all binary input and output variables in the detection process, i.e., $[\Y]_1,\ldots,[\Y]_{\nr}$ and $\hat{X}$, with the presence or absence of single molecular species in the respective detector \acp{CRN}.
To this end, we specialize the mappings $f$ and $g$ introduced before as
\begin{align}
    f(\vec{y}) &= [f_1([\y]_1),\ldots,f_{\nr}([\y]_{\nr})],
\end{align}
where
\vspace*{-0.3cm}
\begin{align}
    f_i(y) = \begin{cases}
        [n_{Y_i^{\mathrm{ON}}} = 1, n_{Y_i^{\mathrm{OFF}}} = 0] \textrm{ if } y = 1,\\
        [n_{Y_i^{\mathrm{ON}}} = 0, n_{Y_i^{\mathrm{OFF}}} = 1] \textrm{ otherwise},
    \end{cases}
\end{align}
and 
\vspace*{-0.3cm}
\begin{align}
    g(x) = \begin{cases}
        [n_{\Xhat^{\mathrm{ON}}} = 1, n_{\Xhat^{\mathrm{OFF}}} = 0] \textrm{ if } x = 1,\\
        [n_{\Xhat^{\mathrm{ON}}} = 0, n_{\Xhat^{\mathrm{OFF}}} = 1] \textrm{ otherwise},
    \end{cases}
\end{align}
and require for all $t$, $n_{Y_i^{\mathrm{ON}}}(t) + n_{Y_i^{\mathrm{OFF}}}(t) = 1$ for all $i = 1,\ldots,\nr$ and $n_{\Xhat^{\mathrm{ON}}}(t) + n_{\Xhat^{\mathrm{OFF}}}(t) = 1$.
Hence, for \ac{CRN}-based detectors, $\bigcup_{i=1,\ldots,\nr} \left\lbrace Y_i^{\mathrm{ON}}, Y_i^{\mathrm{OFF}} \right\rbrace$ and $\left\lbrace \Xhat^{\mathrm{ON}}, \Xhat^{\mathrm{OFF}}\right\rbrace$ constitute the sets of input and output species, respectively.
The chemical rationale behind the mapping $f$ is that depending on the state of each receptor (bound or unbound) its intracellular domain may exhibit different chemical characteristics that then correspond to the ON or OFF state of the associated molecule type, respectively.

Finally, we conclude this section noting that under the mild assumption that the \ac{RX} is agnostic \ac{wrt} the individual identities of the different receptors $[\Y]_i$, i.e., the receptors are statistically identical (not necessarily independent), the aggregate \ac{RV} $\Nrb = \sum_{i=1}^{\nr} Y_i$ can be used for detection instead of $\Y$.
In that case, if the likelihood $\Pr\left[\Nrb|X\right]$ is a unimodal distribution for all $x$, \ac{MAP}\footnote{Because we assume equiprobable information symbols, \ac{MAP} and \ac{ML} detection are equivalent.} detection can be performed by a threshold detector.
Specifically, the \ac{MAP} estimate $\xhatmap$ can be obtained from the receptor occupancy realization $\y$ in a single symbol interval using the optimal decision threshold $\numap$ as
\begin{equation}
    \xhatmap = \begin{cases}
        1 & \text{ if } \sum_{i=1}^{\nr} [\y]_i = \nrb \geq \numap, \\
        0 & \text{otherwise}.
    \end{cases}\label{eq:xhatmap_threshold}
\end{equation}
In the remainder of this paper, we assume that the \ac{MAP} detector reduces to a threshold detector and use the expressions $\Pr\left[\cdot|\Y=\y\right]$ and $\Pr\left[\cdot|\Nrb=\nrb\right]$ equivalently.
\vspace*{-0.5cm}

\subsection{Receiver Architecture}
The proposed \ac{RX} architecture can be divided into two components as illustrated in Figure~\ref{fig:system_model}:
\begin{enumerate}
    \item The task of the \textbf{detector unit} is to estimate $X[l]$ based on the receptor occupancy $\Yl$ in each symbol interval $l$. It comprises multiple chemical computation steps as detailed in the following sections. For the adaptive detector unit presented in Section~\ref{sec:adaptiveRX} the detection mechanism is complemented by a training mechanism that adapts the detector unit's decision threshold to time-varying channel conditions.
    \item The \textbf{internal timing unit} coordinates the temporal order of the different computation steps during the detection process and ensures that the \ac{RX} is reset once the detection of a symbol is finalized. For the proposed adaptive detector unit, the internal timing unit also coordinates the execution order of the computations during the training process.
\end{enumerate}

In the following sections, we first develop two different chemical detector units (Sections~\ref{sec:BMbasedRX} and \ref{sec:adaptiveRX}), before we present a fully chemical realization of the internal timing unit in Section~\ref{sec:timing}.
Utilizing these chemical implementations, the \ac{RX} proposed in this paper is able to operate fully locally, i.e., inside the \ac{RX} cell, and the output of the detector unit can directly be used by other chemical computational units in the \ac{RX} cell to process the received information further.

\vspace*{-0.5cm}
\subsection{Synchronization}

In order to synchronize the symbol transmission and reception between the \ac{TX} and the \ac{RX}, we exploit the fact that intracellular chemical reactions can be activated by external signals as discussed in Section~\ref{sec:CRNprimer}.
In our proposed system architecture, we utilize three different external signals.
The first one, the {\em \ac{STES}} (illustrated as a violet lamp in Fig.~\ref{fig:system_model} and throughout the remainder of this paper) initiates the symbol transmission, i.e., the molecule release, at the \ac{TX}, and triggers the detector unit at the \ac{RX}.
\ac{STES} is utilized by both \ac{CRN}-based detector units proposed in this paper.
The other two external signals, the {\em \ac{PMES}} and the {\em \ac{DMES}} (illustrated as orange and green lamps, respectively, in Fig.~\ref{fig:system_model} and throughout the remainder of this paper), are exclusively utilized by the proposed adaptive detector.
Specifically, the proposed adaptive detector relies on the repeated transmission of pilot symbol sequences.
Hence, \ac{TX} and \ac{RX} have to switch between a \textit{data mode} and a \textit{pilot mode} (they always remain in data mode for the non-adaptive detector).
\ac{PMES} initiates the switch from \textit{data mode} to \textit{pilot mode}, whereas \ac{DMES} initiates the switch from \textit{pilot mode} to \textit{data mode}, cf. Fig.~\mbox{\ref{fig:transmission_phases}-A}. In \textit{data mode}, the \ac{TX} transmits equiprobable data symbols, whereas it sends a fixed sequence of pilot symbols in \textit{pilot mode}.

\section{Boltzmann Machine-based Detector Implementation}
\label{sec:BMbasedRX}
In this section, we introduce a \ac{CRN}-based detection scheme derived from the chemical implementation of a \ac{BM}-based detector. The \ac{BM} is a model used in machine learning.
To this end, we first introduce some background on \acp{BM}, then discuss how \acp{BM} can be utilized for detection in the context of the considered \ac{MC} system model, and finally present the proposed \ac{CRN}-based implementation.
\vspace*{-0.5cm}

\subsection{Background on Boltzmann Machines}\label{sec:bm:background}
\Acp{BM} are graphical probabilistic models, i.e., any \ac{BM} is defined by a random vector of {\em binary-valued nodes} $\Z = [Z_1, \ldots, Z_m]$, where $Z_i \in \{0,1\}$ for all $i = 1, \ldots, m$\footnote{The variable $i$ is intentionally reused here, as the \ac{BM} nodes are later related to the receptors.}, the corresponding vector of {\em activation biases} $\thetavec \in \mathbb{R}^{m}$, and a symmetric, zero-diagonal {\em edge weight matrix} $\W \in \mathbb{R}^{m \times m}$.
The power of \acp{BM} lies in their ability to generate samples from any arbitrary joint probability distribution $q_{\Z'}(\z')$ of binary random variables $\Z' = [Z_1', \ldots, Z_n']$, where $Z_j' \in \{0,1\}$ for all $j = 1, \ldots, n$ and $n \leq m$, in a process called {\em Gibb's sampling}.
Gibb's sampling refers to repeatedly selecting a random node $i'$ from $\Z$ and updating its value according to the following rule
\begin{equation}\label{eq:bm:gibbs_sampling}
    z_{i'} = \begin{cases}
        1,\quad &\textrm{ if } \sigma\left(\theta_{i'} + \sum_{i \neq {i'}} W_{i',i} z_i\right) \geq \frac{1}{2},\\
        0,\quad &\textrm{ otherwise},
    \end{cases}
\end{equation}
where $\theta_i$ denotes the $i$-th element of $\thetavec$, $W_{i',i}$ denotes the element $(i',i)$ of $\W$ capturing the correlation between $Z_i$ and $Z_{i'}$, and $\sigma(x) = \frac{1}{1+\mathrm{e}^{-x}}$, and $z_i \in \{ 0, 1 \}$, $i = 1,\ldots,m$ correspond to a specific state of the \ac{BM}, i.e., one realization of $\Z$.

Specifically, when $m$ is large and \eqref{eq:bm:gibbs_sampling} is executed very often, the empirical statistics of $\lbrace Z_i \rbrace_1^m = \lbrace Z_i | i = 1,\ldots,m \rbrace$ converge to
\vspace*{-0.4cm}
\begin{equation}
    p_{\Z}(\z) = \frac{1}{\partition} \exp\left(\frac{1}{2} \z^\transpose \W \z + \z^\transpose \thetavec \right),
\end{equation}
where $\partition$ is a normalization constant, and $\thetavec$ and $\W$ can be selected such that, after identifying $\Z'$ with a subset $\lbrace \tilde{Z}_j\rbrace_1^n$ of $\Z$, the empirical statistics of $\lbrace \tilde{Z}_j\rbrace_1^n  \subseteq \{ Z_i \}_1^m$ converge to $q_{\Z'}(\z')$.
{\em Training} the \ac{BM} then refers to updating $\thetavec$ and $\W$ iteratively according to the difference between the observed statistics of $\lbrace \tilde{Z}_j\rbrace_1^n$ and $q_{\Z'}(\z')$ as described in \cite[Eq. (4)]{poole2017_chemical_BMs}. While, for general $q_{\Z'}(\z')$, $m \gg n$ and the consideration of higher-order moments of $q_{\Z'}(\z')$ may be required in order to achieve high approximation quality, some $q_{\Z'}(\z')$ can be approximated by \acp{BM} already with $m=n$ that are trained using only the first- and second-order moments $\expectation{q}{\Z'}$ and $\expectation{q}{\Z'\Z'^\transpose}$ of $q_{\Z'}(\z')$.
In the following, we will restrict our considerations to this special case and show in the following section that this is indeed sufficient for accurate detection in the considered system model.
Since a detailed account of \acp{BM} is beyond the scope of this paper, we refer the interested reader to \cite{mackay2003information} for further information on the topic.

\subsection{Detection with Boltzmann Machines}
\label{sec:BM_detection}
In order to use \acp{BM} for detection in the context of the system model introduced in Section~\ref{sec:system_model}, we set $\Z = \begin{bmatrix} \Xhat & \Y \end{bmatrix}^{\transpose}$, where we recall that $\Xhat$ denotes a surrogate \ac{RV} representing the estimate of the transmitted symbol $X$ and $\Y$ denotes the vector of \acp{RV} modeling the \ac{RX} receptor states.
The following theorem guarantees that the \ac{BM} can be designed to possess the MAP property in agreement with Definition~\ref{def:system_model:map_property}.

\vspace*{-3mm}
\begin{theorem}\label{thm:bm:bm_map_detection}
If the optimal threshold $\numap$ is known, parameters $\W$ and $\thetavec$ can be chosen such that the \ac{BM} defined by $(\Z, \W, \thetavec)$ possesses the \ac{MAP} property.\label{th:FVBM_MAP_property}
\end{theorem}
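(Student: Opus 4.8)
The plan is to construct $\W$ and $\thetavec$ explicitly so that the Gibbs update rule \eqref{eq:bm:gibbs_sampling} applied to the node $\Xhat$ reproduces exactly the threshold test \eqref{eq:xhatmap_threshold}. Recall from Section~\ref{sec:system_model:crn-based_detection} that under the operating assumptions of the paper the \ac{MAP} detector reduces to the threshold detector $\xhatmap = \indicator{\sum_{i=1}^{\nr}[\y]_i \geq \numap}$. Since $\Z = [\Xhat\ \Y]^{\transpose}$, I would index the $\Xhat$-node as node $0$ and the receptor nodes as $1,\ldots,\nr$, and choose $W_{0,i} = W_{i,0} = c$ for all $i = 1,\ldots,\nr$ (a common positive constant), all receptor-to-receptor weights zero, and the $\Xhat$-bias $\theta_0 = -c\,\bigl(\numap - \tfrac12\bigr)$; the receptor biases $\theta_i$ for $i\geq 1$ are irrelevant to the $\Xhat$-update and can be set to $0$. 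With this choice, the argument of $\sigma$ in the update for node $0$ becomes $c\bigl(\sum_{i=1}^{\nr} y_i - \numap + \tfrac12\bigr)$.

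The key step is then the elementary observation that $\sigma(x) \geq \tfrac12 \iff x \geq 0$, because $\sigma$ is strictly increasing with $\sigma(0) = \tfrac12$. Hence the Gibbs update sets $\xhat = 1$ iff $c\bigl(\sum_{i=1}^{\nr} y_i - \numap + \tfrac12\bigr) \geq 0$, i.e.\ iff $\sum_{i=1}^{\nr} y_i \geq \numap - \tfrac12$, which — since $\sum_i y_i$ and $\numap$ are integers — is equivalent to $\sum_{i=1}^{\nr} y_i \geq \numap$. This is exactly \eqref{eq:xhatmap_threshold}. Consequently, conditioned on $\Y = \y$, the stationary conditional law of the $\Xhat$-node under Gibbs sampling places probability $\geq \tfrac12$ on the value $\xhatmap(\y)$, which is precisely the statement that $p_{\Xhat\mid\Y}(x\mid\y)\geq\tfrac12 \iff \Pr[X=x\mid\Y=\y]\geq\tfrac12$, i.e.\ the \ac{MAP} property of Definition~\ref{def:system_model:map_property}.

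A small amount of care is needed on two technical points, and I expect the handling of ties to be the only genuine obstacle. First, when $\numap$ is such that equality $\sum_i y_i = \numap - \tfrac12$ could occur, the half-integer offset $+\tfrac12$ in $\theta_0$ is exactly what prevents the degenerate case $\sigma(\cdot) = \tfrac12$ from ever arising at integer counts, so the update is always unambiguous; I would state this explicitly. Second, one should note that because the $\Xhat$-node does not feed back into any receptor node (the receptors are treated as fixed inputs during detection, per Section~\ref{sec:system_model}), the conditional distribution of $\Xhat$ given $\Y$ under the \ac{BM} is literally the single-site Gibbs conditional \eqref{eq:bm:gibbs_sampling}, so no appeal to mixing of the full chain is required — the MAP property holds exactly, not just asymptotically. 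Finally, I would remark that the scale $c>0$ is free (any positive value works), which is consistent with the paper's convention that reaction-rate constants, and hence the induced \ac{BM} parameters, are unique only up to a common positive scaling.
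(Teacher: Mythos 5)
Your proposal is correct and follows essentially the same route as the paper: the paper's proof likewise sets the couplings to a common positive constant $\wxy$ and the bias to $-(\numap-\tfrac{1}{2})\wxy$, so that the clamped single-site conditional $\sigma\bigl(\wxy(\sum_i y_i - \numap + \tfrac{1}{2})\bigr)$ crosses $\tfrac{1}{2}$ exactly at the \ac{MAP} threshold. The only difference, that you zero out the receptor-to-receptor weights and receptor biases while the paper sets $W_{i,i'}=\wxy$ for all $i\neq i'$, is immaterial because with $\Y$ clamped those terms cancel from the conditional of $\Xhat$.
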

\vspace*{-5mm}
\begin{proof}
The theorem follows after setting $W_{i,i'}=\wxy$ for all $i\neq i'$ and $\theta_i = -(\MAPthreshold-\frac{1}{2})\wxy$ for some positive constant $\wxy$.
For the full proof, we refer the reader to \cite{bahe_nanocom}.
\end{proof}
\vspace*{-2mm}
Theorem~\ref{thm:bm:bm_map_detection} not only tells us that it is possible to realize \ac{MAP} detection using \acp{BM}, it also reveals how to properly select $\W$ and $\thetavec$ for the system model considered in this paper.
However, since \acp{BM} can learn arbitrary joint probability distributions of binary \acp{RV}, they can in general be utilized for detection even if the conditions for Theorem~\ref{thm:bm:bm_map_detection} are not fulfilled.

Certainly, one advantage of data-driven detectors (like the one we introduced in this section) as compared to purely model-based detectors is that the former ones can be \textit{trained} from measurements or simulations, while the latter ones require {\em a priori} knowledge and/or strong assumptions on the physical system parameters.
For example, even if the physical channel geometry does not lend itself to analytical modeling, it may still be possible to perform particle-based simulations and use the obtained data to train a data-driven detector. Also, if a testbed is available, measurement data might be used for training the detector.
Trying to leverage these advantages, several neural network-based detectors utilizing digital hardware implementations have already been proposed in the \ac{MC} literature \cite{wei_parzen_pnn,qian_receiver_anns,farsad_neural_networks}.
Now, when it comes to implementing machine learning-based detectors in cellular environments, it has been shown that \ac{CRN}-based implementations of feed-forward neural networks can in principle be constructed \cite{anderson_neural_networks}.
However, the performance of the \ac{CRN} implementations proposed in \cite{anderson_neural_networks} might be subject to significant performance degradation in real world scenarios since they are based on the assumption of deterministic molecule counts, while stochastic fluctuations in single cell environments can be expected, cf.~Section~\ref{sec:system_model:crn-based_detection}.
In contrast, in the following section, we present a \ac{CRN}-based implementation of the \ac{BM}-based detector introduced in this section that not only takes into account the randomness of molecule counts but actually exploits it.
\vspace*{-5mm}
\subsection{Implementing Boltzmann Machines via CRNs}
In \cite{poole2017_chemical_BMs}, it was demonstrated how, in general, \acp{BM} can be implemented using \acp{CRN}.
To this end and analogous to the definition of mapping function $f$ in Section~\ref{sec:system_model}, every \ac{RV} $Z_i$ is represented by molecule species $\Zon{i}$ and $\Zoff{i}$.
Specifically, $n_{\Zon{i}}=1$ and $n_{\Zoff{i}}=1$ are equivalent to $Z_i=1$ and $Z_i=0$, respectively, and $n_{\Zon{i}} + n_{\Zoff{i}}=1$.
The resulting \ac{CRN} then comprises the molecule species $\{\Zon{1}, \Zoff{1}, \dots, \Zon{m}, \Zoff{m}\}$.
For this specific construction, it has been shown in \cite{poole2017_chemical_BMs} that using a method called \textit{Taylor mapping} the steady state of the \ac{CRN} can be matched quite accurately to the steady state distribution of the corresponding \ac{BM} by defining appropriate chemical reactions. 

For the specific case considered in this paper, by specializing the Taylor mapping from \cite{poole2017_chemical_BMs} to the \ac{BM}-based detector derived in the previous section and applying the mappings $f$, $g$ introduced in Section~\ref{sec:system_model:crn-based_detection}, we obtain the set of chemical species
\begin{equation}
    \mathcal{S}_{\mathrm{TM},\Xhat} = \{\Xhaton,\Xhatoff,\Yon{1},\Yoff{1},\dots,\Yon{\nr},\Yoff{\nr}\}.
\end{equation}
Then, we recall that the task of the \ac{CRN}-based detector is, for any fixed $\y$, to generate samples of $\hat{X}$ according to $p_{\hat{X}|\Y}(\hat{X}|\Y=\y)$.
In terms of molecule counts, this observation implies that the species counts of all species $\Yon{i},\Yoff{i}$ are fixed for one detection interval\footnote{See \cite{poole_clamping} for a discussion of different methods to {\em clamp} chemical species, i.e., to fix their molecule counts.}.
Hence, we need to consider only the subset of chemical reactions from the Taylor mapping from \cite{poole2017_chemical_BMs} that possibly change the molecule count of $\Xhaton$ and/or $\Xhatoff$.
This leads to the following chemical reactions.
\begin{equation}
    \begin{split}
        \Xhatoff &\xrightleftharpoons[k\cdot(1+|\theta_1|)]{k} \Xhaton, \\
        \Yon{1} + \Xhatoff &\xrightarrow{k \cdot W_{1,2}} \Yon{1} + \Xhaton, \\[-1em]
        &\;\;\;\;\;\vdots \\
        \Yon{\nr} + \Xhatoff &\xrightarrow{k \cdot W_{1,\nr+1}} \Yon{\nr} + \Xhaton\label{eq:BM_CRN_implementation}.
    \end{split}
\end{equation}
The following theorem confirms that \eqref{eq:BM_CRN_implementation} is indeed a sensible choice and the \ac{MAP} property of the underlying \ac{BM} is preserved in the corresponding \ac{CRN}.

\begin{theorem}
Consider the \ac{CRN} $\mathcal{C}_{\mathrm{TM},\Xhat}=(\mathcal{S}_{\mathrm{TM},\Xhat}, \mathcal{R}_{\mathrm{TM},\Xhat},\kvec_{\mathrm{TM},\Xhat})$, where $\mathcal{R}_{\mathrm{TM},\Xhat}$ and $\kvec_{\mathrm{TM},\Xhat}$ comprise the chemical reactions and rate constants defined in \eqref{eq:BM_CRN_implementation}, respectively. Then, $\mathcal{C}_{\mathrm{TM},\Xhat}$ preserves the \ac{MAP} property if $\W$ and $\thetavec$ are chosen according to Theorem~\ref{thm:bm:bm_map_detection}, i.e., if the \ac{BM} has the \ac{MAP} property.\label{th:bm_crn_map}
\end{theorem}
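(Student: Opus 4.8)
The plan is to exploit that, during the detection of a single symbol, all receptor species $\Yon{1},\Yoff{1},\dots,\Yon{\nr},\Yoff{\nr}$ are clamped (cf.\ Section~\ref{sec:system_model:crn-based_detection}), so that the only species whose counts are affected by the reactions in \eqref{eq:BM_CRN_implementation} are $\Xhaton$ and $\Xhatoff$. Together with the conservation relation $n_{\Xhaton}(t)+n_{\Xhatoff}(t)=1$ imposed by the construction, this reduces $\mathcal{C}_{\mathrm{TM},\Xhat}$, for any fixed receptor realization $\y$, to a two-state \ac{CTMC} with states identified with $\Xhat=1$ (only $\Xhaton$ present) and $\Xhat=0$ (only $\Xhatoff$ present). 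The proof then consists of solving this chain in closed form and locating where its stationary probability crosses $\frac{1}{2}$.

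First I would read off the transition rates of the reduced chain from \eqref{eq:BM_CRN_implementation} under mass-action kinetics. Writing $\nrb=\sum_{i=1}^{\nr}[\y]_i$, exactly the $\nrb$ receptors with $n_{\Yon{i}}=1$ contribute a catalyzed $\Xhatoff\to\Xhaton$ reaction; since Theorem~\ref{thm:bm:bm_map_detection} prescribes $W_{1,i+1}=\wxy>0$ for all $i$ and $\theta_1=-(\numap-\frac{1}{2})\wxy$ (so that $\theta_1\le 0$ and $|\theta_1|=(\numap-\frac{1}{2})\wxy$ for any useful threshold $\numap\ge\frac{1}{2}$), the $\Xhat=0\to\Xhat=1$ rate equals $k(1+\wxy\nrb)$ and the $\Xhat=1\to\Xhat=0$ rate equals $k\bigl(1+(\numap-\frac{1}{2})\wxy\bigr)$. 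Balancing the two-state chain then yields
\[
    p_{\Xhat|\Y}(1\,|\,\y)=\frac{1+\wxy\,\nrb}{2+\wxy\bigl(\nrb+\numap-\frac{1}{2}\bigr)}.
\]

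Next I would verify Definition~\ref{def:system_model:map_property}. Because $\wxy>0$, an elementary rearrangement shows $p_{\Xhat|\Y}(1\,|\,\y)\ge\frac{1}{2}$ if and only if $\nrb\ge\numap-\frac{1}{2}$, which, since $\nrb$ and $\numap$ are integer-valued, is equivalent to $\nrb\ge\numap$, and hence by \eqref{eq:xhatmap_threshold} to $\Pr[X=1\,|\,\Y=\y]\ge\frac{1}{2}$; the case $x=0$ follows by complementation (note $p_{\Xhat|\Y}(1\,|\,\y)$ is never exactly $\frac{1}{2}$ for integer $\nrb$, so no boundary ambiguity arises). This establishes the \ac{MAP} property for $\mathcal{C}_{\mathrm{TM},\Xhat}$ whenever $\W$ and $\thetavec$ are chosen as in Theorem~\ref{thm:bm:bm_map_detection}, and, as in Section~\ref{sec:system_model:crn-based_detection}, sampling the reduced chain in its stationary regime and applying \eqref{eq:system_model:sampling} then returns $\xhatmap$. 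I expect the main difficulty here to be conceptual rather than computational: the Taylor mapping of \cite{poole2017_chemical_BMs} reproduces the \ac{BM} steady state only approximately, so a priori it could displace the $\frac{1}{2}$-crossover and destroy the strict threshold condition that underlies the \ac{MAP} property; the observation that makes the argument go through is precisely that clamping every node except $\Xhat$ leaves an exactly solvable two-state chain, so the polynomial approximation in the off-diagonal terms plays no role and the crossover is preserved verbatim at $\numap$.
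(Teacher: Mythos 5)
Your proposal is correct and follows essentially the same route as the paper: the paper defers the details to \cite{bahe_nanocom}, but its stated method is exactly a detailed-balance computation of the steady-state probability of observing $\Xhaton$ with the receptor species clamped (the same two-state chain you solve), followed by checking that the $\tfrac{1}{2}$-crossover sits at $\nrb \geq \numap$, so the Taylor-mapping approximation shifts the probabilities but not the threshold. Your closed-form expression and the resulting threshold condition match what the detailed-balance argument yields, so no gap remains (only the mild, implicit assumption that $\numap$ is integer-valued, which is consistent with the paper's threshold detector formulation).
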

\vspace*{-5mm}
\begin{proof}
    For the proof of this theorem, we refer the reader to \cite{bahe_nanocom}.
\end{proof}
\vspace*{-7mm}

\subsection{Chemical Decision Making}\label{sec:bm:decision_making}
\vspace*{-2mm}
The \ac{CRN} introduced in the previous section generates samples according to $\Pr\left[\Xhat|\Y\right]$ in agreement with Definition~\ref{def:system_model:map_property}.
In order to convert these samples into detection decisions, the following two steps are applied.
\begin{enumerate}
	\item \Ac{CRN} $\mathcal{C}_{\mathrm{TM},\Xhat}$ is extended to record the empirical frequencies at which $\Xhaton$ and $\Xhatoff$ are observed. 
    To this end, we introduce two counting species $\XhatCon$ and $\XhatCoff$ that are created in the presence of $\Xhaton$ and $\Xhatoff$, respectively, with the same rate constant $\rrc{c}$ according to the following reactions
    \begin{equation}\label{eq:bm:counting_reactions}
        \Xhaton \xrightarrow{\rrc{c}} \Xhaton + \XhatCon,\;\;\;\;\;\;\Xhatoff \xrightarrow{\rrc{c}} \Xhatoff + \XhatCoff.
    \end{equation}
		\item The following additional reaction is introduced such that $\XhatCon$ and $\XhatCoff$ annihilate each other with rate constant $\rrc{a}$ similar to \cite{bi_crn_microfluidic_tx_rx, riaz_map_spatially_partioned},
	\begin{equation}
		\XhatCon + \XhatCoff \xrightarrow{\rrc{a}} \emptyset.\label{eq:BM_annilihation_reaction}
	\end{equation}
\end{enumerate} 
For $\frac{\rrc{a}}{\rrc{c}} \gg 1$ and $\rrc{a},\rrc{c}\rightarrow\infty$, only either $\XhatCon$ or $\XhatCoff$ molecules are observed (but not both at the same time) depending on whether $\Pr[\Xhat=1|\Nrb=\nrb] \geq 1/2$ or not; hence, the presence of $\XhatCon$ corresponds to $\Xhat=1$ and the presence of $\XhatCoff$ to $\Xhat=0$.
For $\rrc{a}$, $\rrc{c}$ finite, $\XhatCon$ and $\XhatCoff$ molecules may coexist; in this case the detection decision is probabilistic and the probability of $\Xhat=1$ ($\Xhat=0$) corresponds to the probability of observing $\XhatCon$ ($\XhatCoff$) relative to observing $\XhatCoff$ ($\XhatCon$).
See also the definition of the \ac{BER} in Section~\ref{sec:evaluation:adaptive_detector:ber}.

Since \ac{BM} models are based on binary \acp{RV}, it is natural to model the binary state (bound vs.~unbound) of each receptor individually in the context of \acp{BM}.
However, as a consequence of this modeling decision, the numbers of chemical species and reactions of $\crn{TM,\hat{X}}$ as defined in Theorem~\ref{th:bm_crn_map} scale linearly with the number of receptors $\nr$.
In the following section, we consider an alternative to \ac{BM}-based \ac{CRN} design, which circumvents this scaling of the \ac{CRN} complexity.

In summary, in this section, we showed how an offline-trained \ac{BM} can be implemented chemically and how its stochastic output can be converted into quasi-deterministic detection decisions for further processing.

\section{Adaptive Detector Implementation}
\label{sec:adaptiveRX}
In the previous section, we showed how \ac{CRN}-based chemical detectors can be derived from a \ac{BM} model.
These \ac{BM}-based detectors are especially well-suited for detection in scenarios in which analytical channel models are not available, but measurements or simulations of the channel can be performed before the \ac{RX} is deployed.
However, such prior knowledge on the channel is often not available or the channel characteristics, like the background noise level, might change over time.
For detection in these cases, in this section, we propose a \ac{CRN}-based detector that can be trained using pilot symbols after its deployment, i.e., online, and in this way adapt to \textit{a priori} unknown and/or time-varying channels.
To this end, we first propose a novel \ac{CRN} design for detection that has significantly lower complexity in terms of the required chemical species and reactions compared to the \ac{BM}-based \ac{CRN} presented in the previous section, but still preserves the \ac{MAP} property.
Then, we extend the proposed low-complexity \ac{CRN} by a chemical learning rule that allows online training using pilot symbols.

\subsection{Low Complexity CRN-based MAP Detection}\label{sec:adaptive_decision_rule}
We recall from Section~\ref{sec:system_model:crn-based_detection} that $\Pr\left[X| \Y\right] = \Pr\left[X| \Nrb \right]$, i.e., the number of bound receptors provides sufficient information for detection.
Following this observation, the input to the \ac{CRN}-based detector proposed in this section is represented as chemical species $\YN$, where $n_{\YN}=\Nrb$, i.e., the molecule count of $\YN$ is fixed to $\Nrb$ for each detection interval.

Based on this input definition, we further define the following set of reactions $\mathcal{R}_{\mathrm{LC}}$
\begin{subequations}\label{eq:low_complexity_detector}
\begin{align}
	\YN + \Xhatoff &\xrightarrow{\kon} \YN + \Xhaton \label{eq:low_complexity_detector_yon}, \\
	\Won + \Xhaton &\xrightarrow{\koff} \Won + \Xhatoff \label{eq:low_complexity_detector_won},
\end{align}
\end{subequations}
with reaction rate constants $\kvec_{\mathrm{LC}}=[\kon,\koff]^\transpose$.
Here, $\Won$ is a new species of weight molecules that deactivates $\Xhaton$ and serves as the threshold for detection.
The intuition behind \eqref{eq:low_complexity_detector} is as follows: The more receptors are bound, the more likely reaction \eqref{eq:low_complexity_detector_yon} occurs, i.e., the more likely $\Xhatoff$ is converted to $\Xhaton$, cf.~\eqref{eq:mass_action_example}. On the other hand, the more weight molecules exist, the more likely $\Xhaton$ is converted to $\Xhatoff$. Hence, for a given number of $\Won$ molecules, $\Xhaton$ is more often observed than $\Xhatoff$ if a critical number of bound receptors is surpassed. Below that number, $\Xhatoff$ is more often observed than $\Xhaton$ in this symbol interval.
In the following theorem, we will see that \eqref{eq:low_complexity_detector} is indeed a sensible choice with respect to the detection performance of the resulting \ac{CRN}.

\begin{theorem}\label{thm:adaptive_crn:MAP_property}
Consider the \ac{CRN} $\crn{LC}=\left(\mathcal{S}_{\mathrm{LC}}, \mathcal{R}_{\mathrm{LC}}, \kvec_{\mathrm{LC}}\right)$, where $\mathcal{S}_{\mathrm{LC}} = \{\YN, \Won, \Xhatoff, \Xhaton\}$. For an appropriate and fixed number of weight molecules, $\nwa$, $\crn{LC}$ possesses the \ac{MAP} property.\label{th:adaptive_map}
\end{theorem}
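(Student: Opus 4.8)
The plan is to show that the two-reaction \ac{CRN} $\crn{LC}$ generates, in the limit of many samples, a surrogate random variable $\Xhat$ whose conditional law $p_{\Xhat\mid\Nrb}$ satisfies the \ac{MAP} property of Definition~\ref{def:system_model:map_property}, by choosing the number of weight molecules $\nwa$ to coincide with the optimal threshold $\numap$ (up to the usual off-by-$\tfrac12$ adjustment already encountered in Theorem~\ref{thm:bm:bm_map_detection}). First I would exploit the constraint $n_{\Xhaton}(t)+n_{\Xhatoff}(t)=1$, which makes $\crn{LC}$ a two-state \ac{CTMC} on the states $\{\Xhaton,\Xhatoff\}$ for each fixed detection interval. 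Since $n_{\YN}=\Nrb$ and $n_{\Won}=\nwa$ are clamped during a symbol interval, the transition rate from $\Xhatoff$ to $\Xhaton$ is $\kon\,\nrb$ by mass action, cf.~\eqref{eq:mass_action_example}, and the rate from $\Xhaton$ to $\Xhatoff$ is $\koff\,\nwa$. Hence the stationary probability of observing $\Xhaton$ is
\begin{equation}
    p_{\Xhat\mid\Nrb}(1\mid\nrb) = \frac{\kon\,\nrb}{\kon\,\nrb + \koff\,\nwa},
\end{equation}
and the long-run empirical frequency of $\Xhaton$ converges to this value by the ergodic theorem for finite-state \acp{CTMC} (this is the analogue of the law-of-large-numbers argument in \eqref{eq:system_model:sampling}).

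Next I would translate the \ac{MAP} property into an elementary inequality on this Bernoulli parameter. We have $p_{\Xhat\mid\Nrb}(1\mid\nrb)\geq\tfrac12$ \ac{iff} $\kon\,\nrb \geq \koff\,\nwa$, i.e.\ \ac{iff} $\nrb \geq \tfrac{\koff}{\kon}\,\nwa$. On the other hand, by \eqref{eq:xhatmap_threshold} and the assumed reduction of \ac{MAP} detection to a threshold rule, $\Pr[X=1\mid\Nrb=\nrb]\geq\tfrac12$ \ac{iff} $\nrb\geq\numap$. Therefore the \ac{MAP} property holds as soon as the two threshold conditions coincide on the integer grid, for which it suffices to pick $\nwa$ such that $\tfrac{\koff}{\kon}\,\nwa \in (\numap-1,\numap]$; for instance, taking $\kon=\koff$ one sets $\nwa=\numap$. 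A brief check of the two boundary/degenerate cases ($\nrb=0$, where no $\Xhaton$ is ever produced, and $\nrb$ maximal) confirms the equivalence extends to the endpoints, since $\Pr[X=1\mid\Nrb=0]<\tfrac12$ whenever $\numap\geq 1$, which holds for any non-degenerate channel.

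The main obstacle I anticipate is not the Markov-chain computation, which is routine, but making precise the sense in which the finite-sample detector ``possesses the \ac{MAP} property.'' Definition~\ref{def:system_model:map_property} is stated in terms of the limiting distribution $p_{\Xhat\mid\Y}$, so I would be careful to (i) invoke the clamping of $\YN$ and $\Won$ to justify that the relevant dynamics are genuinely those of a fixed two-state chain per symbol interval, (ii) note that the sampling-and-averaging step \eqref{eq:system_model:sampling} — or its chemical realization via the counting/annihilation reactions of Section~\ref{sec:bm:decision_making}, which apply verbatim here since $\mathcal{S}_{\mathrm{LC}}$ also has $|\mathcal{B}|=2$ — is what produces the actual decision, and (iii) address the edge case $\koff\,\nwa$ large relative to $\kon\,\nrb$ for all admissible $\nrb$, which would collapse the detector to the trivial all-zero rule; this is precisely excluded by the ``appropriate'' choice of $\nwa$ in the theorem statement. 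Beyond that, one should remark that $\numap$ need not be known in closed form for the adaptive variant — it is exactly the quantity the online learning rule of the next subsection will estimate by tuning $\nwa$ — so the theorem should be read as an existence statement: \emph{there exists} a fixed $\nwa$ for which $\crn{LC}$ is \ac{MAP}-optimal, namely any $\nwa$ placing the chemical threshold in the half-open interval above.
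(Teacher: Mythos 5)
Your proposal follows essentially the same route as the paper's proof: you model $\crn{LC}$ as a two-state chain with clamped $\YN$ and $\Won$, compute the stationary probability $\Pr[n_{\Xhaton}=1\mid\Nrb=\nrb]=\nrb/(\nrb+\tfrac{\koff}{\kon}\nwa)$ via detailed balance, and read off the threshold condition $\nrb\geq\tfrac{\koff}{\kon}\nwa$, matching it to $\numap$ (e.g., $\nwa=\numap$ for $\kon=\koff$), exactly as the paper does. Your additional remarks on ergodic averaging, integer-grid placement of the threshold, and edge cases are consistent refinements rather than a different argument.
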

\begin{proof}
    We take the same detailed balance-based approach as in the proof of Theorem~\ref{th:bm_crn_map}, cf.~\cite{bahe_nanocom}, to compute the steady-state probability of observing $\Xhaton$.
    Here, the rate of converting $\Xhatoff$ to $\Xhaton$ is given by $n_{\YN} \cdot \rrc{on}$ if $n_{\Xhatoff}=1$ and $0$ otherwise. $\Xhaton$ is converted to $\Xhatoff$ with rate $n_{\mol{W}} \cdot \rrc{off}$ if $n_{\Xhaton}=1$ and rate $0$ otherwise.
    Hence, we obtain
    \begin{equation}
        \Pr[n_{\Xhaton}=1|n_{\YN}=\nrb] = \frac{\nrb}{\nrb+\frac{\rrc{off}}{\rrc{on}}\nwa} \label{eq:adaptive_rx_theorem_2_Pr}.
    \end{equation}
    Solving $\Pr[n_{\Xhaton}=1|\Nrb=\nrb] \geq 0.5$ for $\nrb$ yields $\nrb \geq \frac{\rrc{off}}{\rrc{on}}\nwa$.
    For example, for $\frac{\rrc{off}}{\rrc{on}}=1$ and large $\rrc{on}$, the conditions for the \ac{MAP} property are fulfilled if $\nwa=\MAPthreshold$.
\end{proof}

We conclude from Theorem~\ref{thm:adaptive_crn:MAP_property} that \ac{CRN}-based \ac{MAP} detection can be achieved by $\crn{LC}$ with only a few different chemical species, namely four, and a small set of chemical reactions.

\subsection{Chemical Decision Making}
\label{sec:adaptive_chemical_decisions}
In principle, the chemical decision making process discussed for $\crn{\mathrm{TM},\Xhat}$ in Section~\ref{sec:bm:decision_making} can analogously be applied to $\crn{LC}$.
However, the decision making can also be integrated into $\crn{LC}$ by modifying \eqref{eq:low_complexity_detector} to directly produce counting species $\XhatCoff$ and $\XhatCon$.
Together with the annihilation reaction, this yields the following set of reactions $\mathcal{R}_{\mathrm{LC}'}$
\begin{subequations}\label{eq:adaptive_decision}
\begin{align}
	\YN &\xrightarrow{\kon} \YN + \XhatCon \label{eq:adaptive_decision_1},\\
	\Won  &\xrightarrow{\koff} \Won + \XhatCoff \label{eq:adaptive_decision_2},\\
    \XhatCon + \XhatCoff &\xrightarrow{\rrc{deg}} \emptyset \label{eq:adaptive_decision_3}.
\end{align}
\end{subequations}

As can be easily checked, $\crn{LC'}=\left(\mathcal{S}_{\mathrm{LC}'}, \mathcal{R}_{\mathrm{LC}'}, \kvec_{\mathrm{LC}'}\right)$, where $\mathcal{S}_{\mathrm{LC}'} = \lbrace \YN, \Won, \XhatCon, \XhatCoff \rbrace$ and $\kvec_{\mathrm{LC}'} = [\kon, \koff, \rrc{deg}]$, preserves the MAP property of $\crn{LC}$ if $\nwa\!=\!\MAPthreshold$, $\rrc{deg}\gg\kon\!=\!\koff$, for sufficiently large $\kon,\koff$, since in that case $\Pr\left[n_{\XhatCon} > 0 | n_{\YN}=\nrb\right] \geq 1/2$ \ac{iff} $\nrb \geq \MAPthreshold$.

\subsection{Pilot Symbol-based Learning Rule}\label{sec:in_silico_learning}
When the optimal threshold $\MAPthreshold$ is unknown or changes over time, it can be estimated based on the transmission of pilot symbols. To facilitate a chemical implementation of the threshold estimation and adoption by the \ac{RX}, we propose a very simple learning algorithm: After reception of a pilot symbol in symbol interval $l$, the symbol estimate $\xhat$ is compared to the true pilot symbol value $\xtrue$ and the number of $\Won$ molecules $\nwa$ is adapted according to 
\begin{equation}
     \nwa[l+1] = \begin{cases}
                    \nwa[l], & \text{\quad if}\;\; \xhat=\xtrue, \\
                    \nwa[l]+1, & \text{\quad if}\;\; \xhat=1 \text{ and } \xtrue = 0, \\
                    \nwa[l]-1, & \text{\quad if}\;\; \xhat=0 \text{ and } \xtrue = 1,
                \end{cases}\label{eq:update_rule}
\end{equation}
i.e., if $\xhat=1$ and $\xtrue=0$ ($\xhat=0$ and $\xtrue=1$), the number of weight molecules $\Won$ is increased (decreased) such that the rate of \eqref{eq:adaptive_decision_2} increases (decreases) and more (fewer) $\XhatCoff$ molecules are produced.
Clearly, \eqref{eq:update_rule} is suboptimal in the sense that threshold updates may even be performed in case of a wrong decision if the optimal threshold $\MAPthreshold$ is already reached, since the detector is unaware of whether $\nwa[l]$ is equal to $\MAPthreshold$ or not.
However, we will see in Section~\ref{sec:simulations} that the performance loss due to this suboptimality is relatively small, while we confirm in the following section that the complexity of the chemical implementation of \eqref{eq:update_rule} is indeed very low.

\subsection{Chemical Learning}
\label{sec:adaptive_RX_chemical_learning}
The pilot symbol-based learning rule in \eqref{eq:update_rule} can be chemically implemented by the following chemical reactions
\vspace*{-5mm}
\begin{subequations}\label{eq:adaptive_learning_reaction}
\begin{align}
    \XhatCon + \XtrueOff + \Hmol &\xrightarrow{\rrc{l,1}} \XhatCon + \XtrueOff + \Won \label{eq:adaptive_learning_reaction_1}, \\
    \XhatCoff + \XtrueOn + \Won + \Hmol &\xrightarrow{\rrc{l,2}} \XhatCoff + \XtrueOn,\label{eq:adaptive_learning_reaction_2}
\end{align}
\end{subequations}
with reaction rate constants $\rrc{l,1}$ and $\rrc{l,2}$. Here, the true pilot symbol is encoded by chemical species $\XtrueOn$ and $\XtrueOff$ for $\xtrue=1$ and $\xtrue=0$, respectively. $\Hmol$ is a helper molecule that is consumed during the update step in order to ensure that exactly one $\Won$ molecule is added or removed. In Section~\ref{sec:timing_mechanism_helper_molecules}, we will explain how $\Hmol$, $\XtrueOn$, and $\XtrueOff$ can be produced in practice. 
\section{Timing Mechanism for a Complete Receiver Implementation}
\label{sec:timing}
\begin{figure*}
    \vspace*{-12mm}
    \centering
    \includegraphics[width=0.88\textwidth]{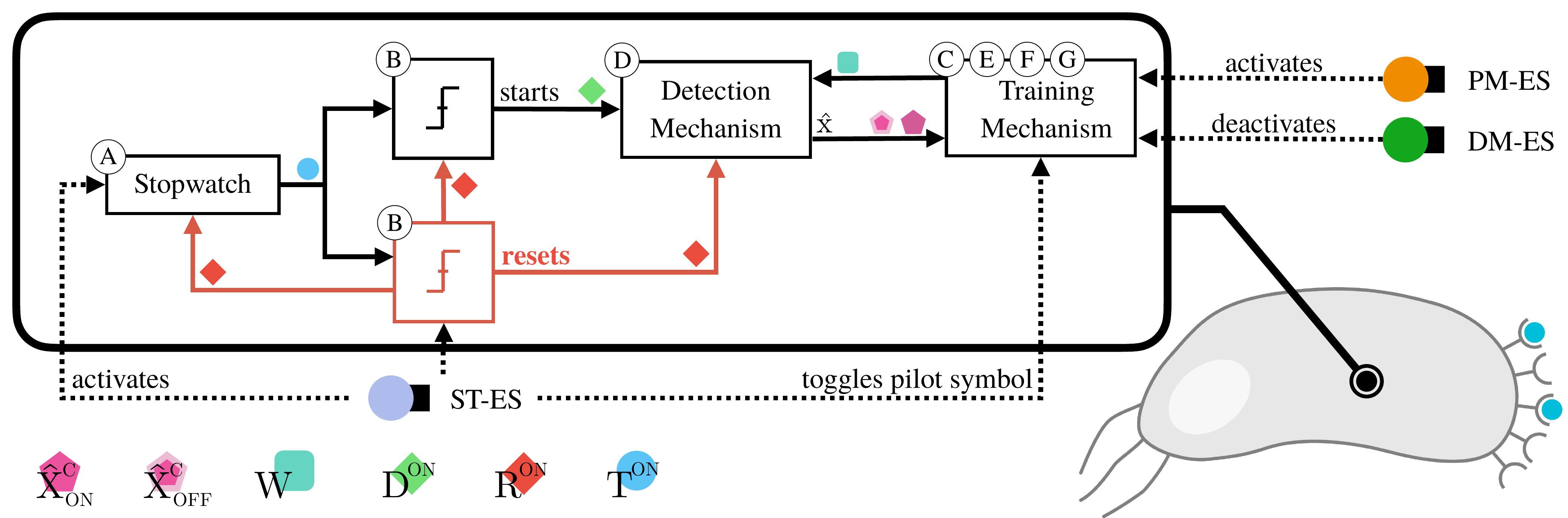}
    \vspace*{-3mm}
    \caption{\textbf{Main functionalities of the \ac{RX} components.} \ac{STES} ends the previous reset phase, starts the stopwatch, and updates the stored value of the pilot symbol. Once the stopwatch has run long enough, the detection phase starts and the symbol value is estimated. If the training mechanism is active, it updates $\nwa$ if necessary. Eventually, the reset phase starts and the \ac{RX} is reset. The letters on the individual blocks indicate where the chemical implementation is shown in Figure~\ref{fig:system_overview}.}\label{fig:flowchart}
    \vspace*{-10mm}
\end{figure*}
\begin{figure*}[t]
    \vspace*{-10mm}
    \centering
    \includegraphics[width=\textwidth]{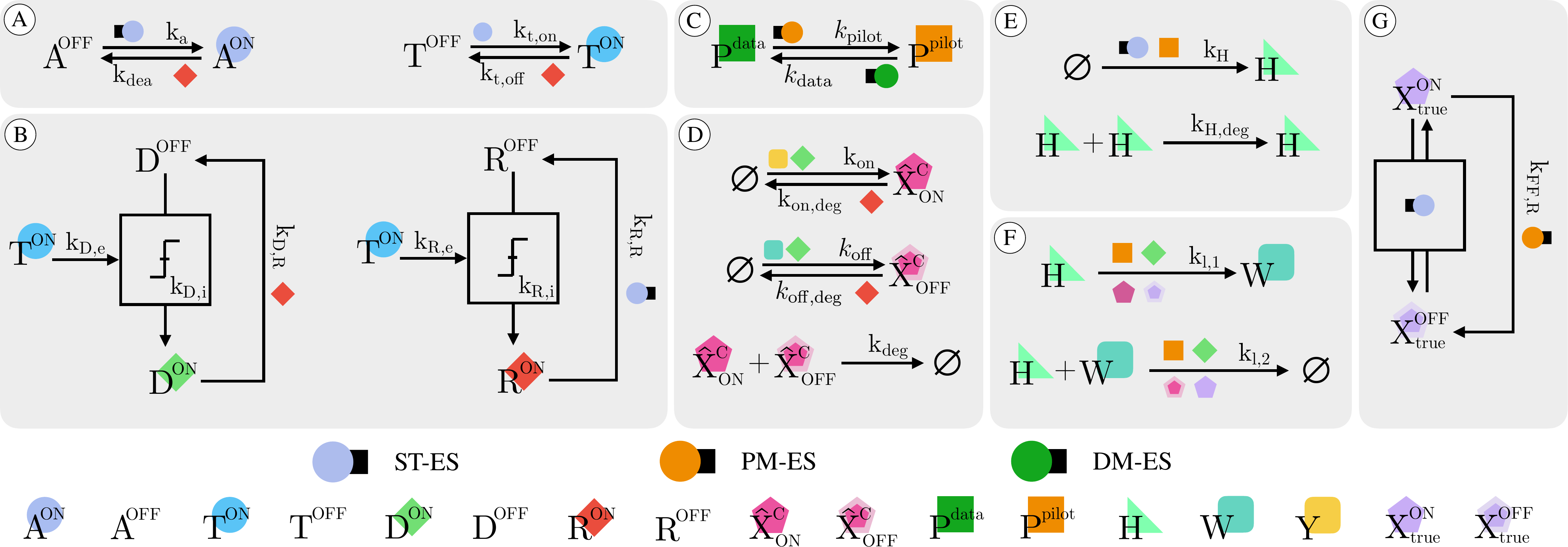}
    \vspace*{-8mm}
    \caption{\textbf{Overview of the adaptive \ac{RX}.} A: Stopwatch. B: Switches to define computation phases (cf. Figure~\ref{fig:chemical_switch}). C: \ac{PMES} and \ac{DMES} prompt the transition between pilot and data phases. D: Detection mechanism. E: Production of the helper molecules required for training. F: Training process. G: Chemical flip-flop storing the true value of pilot symbols (cf. Figure~\ref{fig:chemical_flipflop}). The notation of the chemical reactions in this figure slightly extends \eqref{eq:system_model:reaction_external_catalyst} such that all types of catalysts (external and non-external) are here depicted by their corresponding symbols above or below the arrows, e.g., the upper reaction in block E is equivalent to $\Ptrain \xrightarrow{\mathrm{ST-ES}\,\rrc{H}} \Hmol + \Ptrain$. The symbol $\varnothing$ denotes any chemical species that is abundant and of no further relevance to the considered \ac{CRN}.}
    \label{fig:system_overview}
    \vspace*{-10mm}
\end{figure*}

In order to guarantee that the chemical reactions during the detection process as defined in Sections~\ref{sec:BMbasedRX} and \ref{sec:adaptiveRX} happen at the right time and under correct initial conditions, our cellular \ac{RX} needs some additional intracellular timing mechanisms.
In case of the adaptive detection scheme presented in Section~\ref{sec:adaptiveRX}, these mechanisms should also coordinate the respective detection and training processes.
Since it requires more coordination effort as compared to the \ac{BM}-based detector, we develop the timing mechanism for the adaptive detector implementation from Section~\ref{sec:adaptiveRX}.
However, the proposed design can easily be extended to the chemical \ac{BM}-based detector by omitting the training mechanism and integrating the corresponding chemical detection mechanism from Section~\ref{sec:BMbasedRX}.

Before we develop the chemical implementation of the proposed timing mechanism, we outline the required functional components and their interactions on a schematic level.
As illustrated in Figure~\ref{fig:flowchart}, at the beginning of each symbol interval, the external signal \ac{STES} activates a chemical {\em stopwatch} (see Section~\ref{sec:timing:stopwatch}) and {\em toggles} the pilot symbol value in the training mechanism (see Section~\ref{sec:flip_flop}).
The purpose of the stopwatch is to (i) first initiate the detection phase and (ii) later initiate the reset phase via two {\em chemical switches} (Section~\ref{sec:timing:chemical_switch}).
During the detection phase, the \ac{CRN}-based detector $\crn{LC'}$, as developed in Section~\ref{sec:adaptiveRX}, generates an estimate of the transmit signal, while, controlled by two external signals, \ac{PMES} and \ac{DMES}, the \ac{RX} is either in {\em pilot mode}, i.e., the training mechanism is active, or in {\em data mode}, i.e., the training mechanism not active (see Section~\ref{sec:timing:switching_pm_dm}).
When the training mechanism is active, it utilizes the transmit symbol estimate from the detector and the stored pilot symbol value to update the number of weight molecules according to the chemical learning rule introduced in Section~\ref{sec:adaptive_RX_chemical_learning} (see Section~\ref{sec:timing_mechanism_helper_molecules}).
Finally, during the reset phase, some {\em chemical cleanup} tasks are performed that reset the state of the \ac{RX} in preparation for the next symbol interval (see Section~\ref{sec:timing:cleanup}).
The reset phase of the current symbol interval is terminated by the \ac{STES} initiating the next symbol interval.
In the following, we discuss how to chemically implement the components that are required to realize all these functionalities in the considered cellular \ac{RX}.

\subsection{Stopwatch}\label{sec:timing:stopwatch}
The key component of our timing mechanism is the chemical stopwatch illustrated in Figure~\ref{fig:system_overview}-A.
In the presence of $\ac{STES}$, i.e., at the beginning of each symbol interval, $\Aoff$ molecules are converted to $\Aon$ molecules, where the presence of the $\Aon$ molecules indicates that the stopwatch is {\em activated}.
The rate of conversion of $\Aoff$ to $\Aon$, $\rrc{a}$, is relatively fast, such that with high probability all existing $\Aoff$ molecules are converted to $\Aon$ at the beginning of each symbol interval.
Now, in its activated state, due to the presence of $\Aon$ molecules, the chemical stopwatch produces $\Ton$ molecules from $\Toff$ molecules with rate constant $\rrc{t,on}$.
In this way, the number of $\Ton$ molecules increases as the time since the beginning of the current symbol interval elapses.
The chemical switches presented in the following subsection leverage this information to trigger the time-dependent execution of specific other chemical reactions.

During the reset phase, i.e., in the presence of the $\Ron$ molecules produced by the reset mechanism as discussed later in this section, the chemical stopwatch is deactivated by converting the $\Aon$ molecules back to $\Aoff$ molecules with reaction rate constant $\rrc{dea}$ and the existing $\Ton$ molecules are converted back to $\Toff$ molecules with reaction rate constant $\rrc{t,off}$. 

\subsection{Chemical Switches}\label{sec:timing:chemical_switch}
\begin{figure}
    \vspace*{-12mm}
    \begin{minipage}{0.48\textwidth}
        \centering
        \includegraphics[width=3.0in]{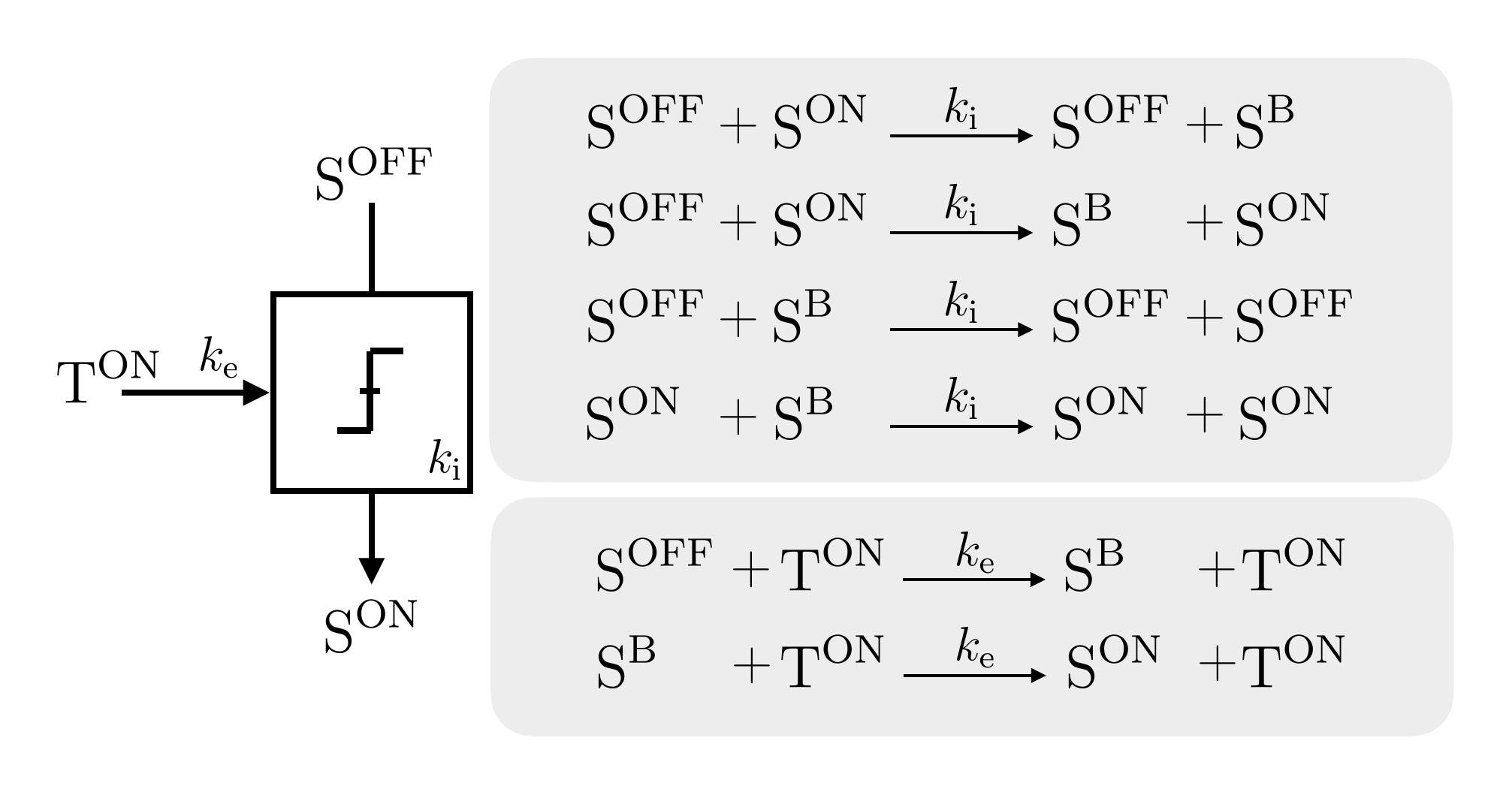}
        \vspace*{10mm}
        \caption{\textbf{Chemical Switch.} The reactions in the upper box implement a bi-stable behavior that can be altered by the $\Ton$ molecules via the reactions in the lower box \cite{scirep_approx_majority}.}
        \label{fig:chemical_switch}
    \end{minipage}
    \hfill
    \begin{minipage}{0.48\textwidth}
        \centering
        \includegraphics[width=3.0in]{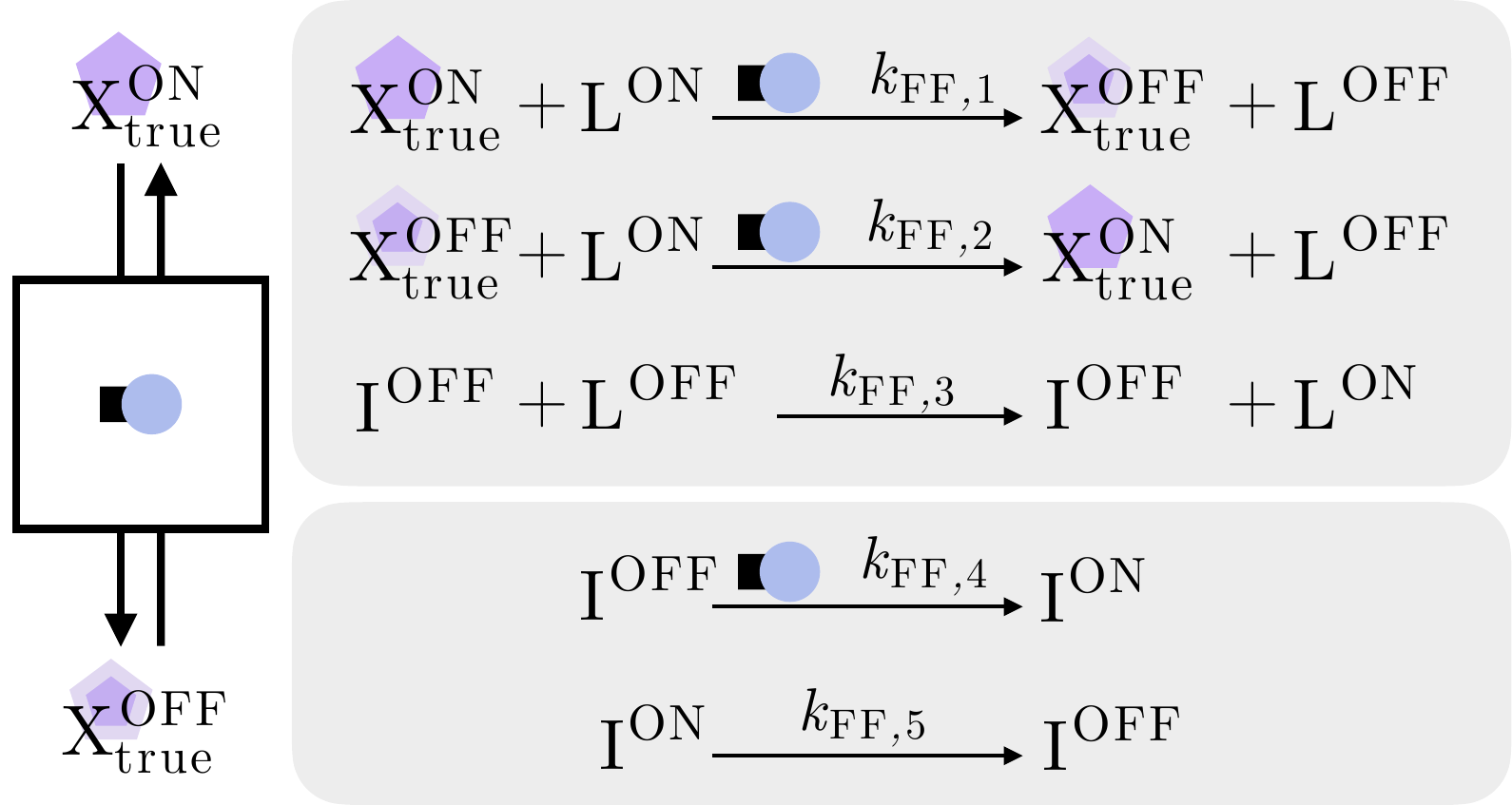}
        \includegraphics[width=3.2in]{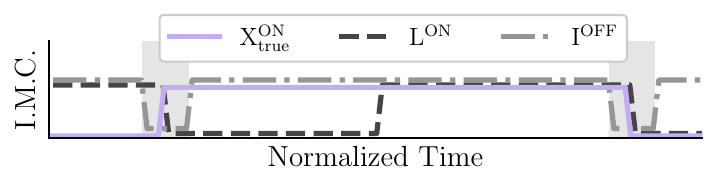}
        \vspace*{-5mm}
        \caption{\textbf{Chemical Flip-Flop.} \ac{STES} (top: blue lamp, bottom: grey shaded) can catalyze the flip of the molecule indicating the true value of the pilot symbol. The evolution of the Idealized Molecule Count (I.M.C.) is shown at the bottom.}
    \label{fig:chemical_flipflop}
    \end{minipage}
    \vspace*{-10mm}
\end{figure}
As indicated in Figure~\ref{fig:flowchart}, the timing molecules $\Ton$ produced by the stopwatch are utilized by two \textit{chemical switches}, cf. Figure~\ref{fig:system_overview}-B, to switch on the detection and reset phases.
Specifically, one chemical switch, the {\em detection switch}, initiates the detection phase by converting $\Doff$ molecules to $\Don$ molecules once the number of $\Ton$ molecules exceeds some threshold.
Similarly, the second chemical switch, the {\em reset phase switch}, initiates the reset phase by converting $\Roff$ molecules to $\Ron$ molecules once the number of $\Ton$ molecules exceeds the respective threshold.
Hereby, the specific threshold values depend on the respective reaction rate constants of the chemical switches and the molecule counts of the chemical species $\Doff$, $\Don$ and $\Roff$, $\Ron$, respectively.

The chemical implementation of these switches, which is adopted from \cite{scirep_approx_majority}, is illustrated in Fig.~\ref{fig:chemical_switch}.
In this illustration, the notation $\Soff$, $\Sb$, and $\Son$ represents either $\Doff$, $\Db$, and $\Don$ or $\Roff$, $\Rb$, and $\Ron$ depending on which of the aforementioned switches is considered.
From Fig.~\ref{fig:chemical_switch}, we observe that the switch can be divided into two functionally distinct units.
One unit (upper box in Fig.~\ref{fig:chemical_switch}) computes an \textit{approximate majority vote} between $\Son$ and $\Soff$ molecules, i.e., if there exist more $\Son$ than $\Soff$ molecules, the chemical reactions ensure that only $\Son$ molecules are left after some time and vice versa.
In other words, this unit renders the chemical switches {\em bistable}, i.e., they possess two stable states in which either only $\Soff$ or only $\Son$ molecules exist.
We refer to these two states as OFF state and ON state, respectively.
The intermediate species $\Sb$ is required for the realization of two non-linear positive feedback loops, on which the switches' bistability depends.

\textbf{OFF$\rightarrow$ON:} In order to transition from OFF state to ON state, the second unit (chemical reactions in the lower box in Fig.~\ref{fig:chemical_switch}) converts $\Soff$ molecules to $\Son$ molecules, whereby the rate of the conversion depends on the number of $\Ton$ molecules; the more $\Ton$ molecules exist, the faster happens the conversion.
Assuming the switch is in OFF state initially and the number of $\Ton$ molecules increases monotonically during one symbol interval, then, eventually, the switch transitions to its ON state as the number of $\Ton$ molecules surpasses its inherent threshold.

\textbf{ON$\rightarrow$OFF:} The detection switch is reset to its OFF state during the reset phase, i.e., when $\Ton$ molecules are being removed from the system and the presence of $\Ron$ molecules catalyzes the conversion of $\Don$ molecules to $\Doff$ molecules (see Fig.~\ref{fig:system_overview}-B).
Finally, the reset phase switch is reset to its OFF state in the presence of \ac{STES}, since \ac{STES} catalyzes the conversion of $\Ron$ molecules to $\Roff$ molecules.

The output molecules produced by the considered chemical switches, $\Don$ and $\Ron$, are utilized inside the \ac{RX} to activate certain chemical reactions exclusively in the context of the detection phase and reset phase, respectively.
As shown in Figure~\ref{fig:system_overview}-D, the detection reactions \eqref{eq:adaptive_decision} are extended in the fully chemical \ac{RX} implementation presented in this section to require the presence of $\Don$ molecules such that the reactions occur only during the detection phase.
Also, the training reactions \eqref{eq:adaptive_learning_reaction} require the presence of $\Don$ molecules in the fully chemical \ac{RX} implementation. 
We will discuss in Section~\ref{sec:timing:cleanup} which specific reactions in the context of the reset phase are activated by the presence of $\Ron$ molecules.

\vspace*{-5mm}
\subsection{Additional Reactions for Training}\label{sec:timing:training}
There are two issues that arise when embedding the chemical pilot symbol-based training mechanism developed in Section~\ref{sec:adaptive_RX_chemical_learning} into a cellular \ac{RX} that processes not only one but multiple symbols.
First, in each training cycle exactly one helper molecule $\mol{H}$ needs to be available to catalyze a change in the number of weight molecules $\mol{W}$ by at most one molecule.
Second, the current pilot symbol value represented by the molecules $\XtrueOn$ and $\XtrueOff$ needs to be toggled depending on the transmitted pilot symbol.

\subsubsection{Production of Helper Molecules}
\label{sec:timing_mechanism_helper_molecules}
To ensure that the chemical implementation of learning rule \eqref{eq:update_rule} changes the number of $\Won$ molecules by not more than one molecule per pilot symbol, exactly one helper molecule $\Hmol$, which can be consumed by the update reactions \eqref{eq:adaptive_learning_reaction}, should be available at each pilot symbol transmission.
To satisfy this constraint, we define the two chemical reactions illustrated in \mbox{Figure~\ref{fig:system_overview}-E}.
The first reaction produces helper molecules $\Hmol$ with reaction rate constant $\rrc{H}$ if \ac{STES} is present (which is the case at the beginning of each symbol interval) {\em and} the system is in pilot mode, i.e., in the presence of $\Ptrain$ molecules, cf.~Section~\ref{sec:timing:switching_pm_dm}.
The second reaction in Figure~\ref{fig:system_overview}-E ensures that one of two $\Hmol$ molecules is annihilated with rate constant $\rrc{H,deg}$ if two $\Hmol$ molecules react. This ensures that no more than one molecule is available if $\rrc{H,deg} \gg \rrc{H}$.

\subsubsection{A Chemical Flip-Flop for Pilot Symbols}
\label{sec:flip_flop}
In Section~\ref{sec:adaptive_RX_chemical_learning}, we did not discuss how the \ac{RX} obtains the true value of the pilot symbol.
While storing this information can be a complex task in general, we may assume here without loss of generality (since there exists no \ac{ISI}) that the pilot symbol sequence is of the form $\langle0,1,0,1,\dots\rangle$, i.e., $1$s and $0$s alternate.
In this case, a \textit{chemical flip-flop} can be utilized to retrieve the true pilot symbol value \mbox{for each symbol interval}.

Figure~\ref{fig:chemical_flipflop} shows one possible implementation of a chemical flip-flop (top) and an idealized time evolution of its relevant molecule counts (bottom).
In the flip-flop, there exists always either one $\XtrueOn$ or one $\XtrueOff$ molecule. 
Besides the molecule species $\XtrueOn$ and $\XtrueOff$, the flip-flop requires two additional molecules that can each be either in activated or deactivated state and are denoted by $\Lon$, $\Loff$ and $\Ion$, $\Ioff$, respectively.
Now, upon observing \ac{STES}, $\XtrueOn$ is switched to $\XtrueOff$ and vice versa with rate constants $\rrc{FF,1}$ and $\rrc{FF,2}$, respectively.
At the same time, it is ensured that only one such conversion occurs while \ac{STES} is active, since the respective reactions also convert $\Lon$ to $\Loff$ and only one $\Lon$ molecule is available. 
Finally, in order to reset this mechanism, $\Loff$ is converted back to $\Lon$ with rate constant $\rrc{FF,3}$ in the presence of $\Ioff$, where the reactions in the lower box in Fig.~\ref{fig:chemical_flipflop} guarantee (for appropriately chosen rate constants $\rrc{FF,4}$ and $\rrc{FF,5}$, cf.~Table~\ref{tab:rate_constants}) that $\Ioff$ is only available when \ac{STES} is no longer present.
Specifically, while \ac{STES} is active, $\Ioff$ is converted to $\Ion$ with a comparatively large rate constant $\rrc{FF,4}$, while $\Ion$ is converted to $\Ioff$ with comparatively small rate constant $\rrc{FF,5}$.
Hence, the presence of $\Ioff$ is an indicator that \ac{STES} is currently not active.

Despite technically not being a part of the flip-flop implementation, it is worth noting that at the beginning of the transmission of each pilot sequence, the external signal \ac{PMES}, discussed in more detail in the following subsection, catalyzes a reaction converting $\XtrueOn$ to $\XtrueOff$ molecules with rate constant $\rrc{FF,R}$ to ensure that the chemical flip-flop is aligned with the pilot symbol sequence (cf. Figure~\ref{fig:system_overview}-G).

\vspace*{-5mm}
\subsection{Switching between Pilot and Data Modes}\label{sec:timing:switching_pm_dm}
The external signals \ac{PMES} and \ac{DMES} switch the \ac{RX} between the pilot and data modes. To this end, the \ac{RX} utilizes $\Ptrain$ and $\Ptransmit$ molecules to keep track of its current operation mode, i.e., if $n_{\Ptrain}(t) \gg n_{\Ptransmit}(t)$, the \ac{RX} is in pilot mode at time $t$, whereas if $n_{\Ptrain}(t) \ll n_{\Ptransmit}(t)$, the \ac{RX} is in data mode.
\mbox{Figure~\ref{fig:system_overview}-C} shows how \ac{PMES} and \ac{DMES} convert $\Ptrain$ to $\Ptransmit$ and vice versa.
As indicated in Figures~\ref{fig:system_overview}-E and \ref{fig:system_overview}-F, the production of the helper molecules, as discussed in the previous subsection, and the generation/deletion of weight molecules in agreement with \eqref{eq:adaptive_learning_reaction} occur only in the pilot mode.
In particular, we note that the training reactions \eqref{eq:adaptive_learning_reaction} are extended in the context of the fully chemical \ac{RX} implementation to require the presence of both, $\Don$ and $\Ptrain$ molecules, as catalysts, cf.~Figure~\ref{fig:system_overview}-F.

\vspace*{-5mm}
\subsection{Chemical Cleanup}\label{sec:timing:cleanup}
\label{sec:timing_reset_phase}
As discussed in Section~\ref{sec:timing:chemical_switch}, the reset phase switch produces $\Ron$ molecules once a predefined amount of time has elapsed on the chemical stopwatch.
These $\Ron$ molecules catalyze several chemical cleanup tasks that reset the \ac{RX} before the start of the next symbol interval.
Specifically, the chemical stopwatch is reset by the $\Ron$-catalyzed conversion of $\Aon$ to $\Aoff$ molecules with rate constant $\rrc{dea}$ and the conversion of timer molecules $\Ton$ back to $\Toff$ with rate constant $\rrc{t,off}$ as shown in Figure~\ref{fig:system_overview}-A.
Furthermore, the $\Ron$ molecules also reset the detection switch by converting $\Don$ to $\Doff$ molecules with rate constant $\rrc{D,R}$.
Finally, as shown in Figure~\ref{fig:system_overview}-D, the $\Ron$ molecules remove any $\XhatCon$ and $\XhatCoff$ molecules with rate constants $\rrc{on,deg}$ and $\rrc{off,deg}$, respectively, in order to reset the counting mechanism of the chemical detector. 
The chemical cleanup is terminated by \ac{STES} at the beginning of the next symbol interval, since \ac{STES} catalyzes the conversion of all $\Ron$ to $\Roff$ molecules, cf. Figure~\ref{fig:system_overview}-B.

We will confirm in the next section that the proposed fully chemical \ac{RX} implementation indeed successfully integrates the various functional blocks of the proposed adaptive \ac{RX}.
Most notably, the various chemical building blocks introduced and discussed in this section together form a cellular \ac{RX} implementation that is not only capable of reliably detecting single transmitted symbols, but also to adapt to changing channel conditions over the course of multiple subsequent transmissions of data and pilot symbols.

\vspace*{-7mm}
\section{Simulation Results}
\label{sec:simulations}
In this section, we evaluate the performance of the \ac{RX} architecture proposed in this paper for different transmission scenarios.
First, we introduce the considered scenarios in Section~\ref{sec:simulations_setup}, before we confirm in Section~\ref{sec:basic_implementation_performance} that both the \ac{BM}-based detector and the adaptive detector are capable of approaching the optimum, i.e., \ac{MAP}, decision thresholds, even under very challenging transmission conditions.
Finally, in Section~\ref{sec:crn_based_implementation_results}, we evaluate the performance of the complete chemical \ac{RX} presented in Section~\ref{sec:timing} when employed in each of the considered transmission scenarios and compared to different theoretical baselines.

\vspace*{-3mm}
\subsection{Transmission Scenarios and Simulation Setup}
\label{sec:simulations_setup}
\vspace*{-2mm}
With respect to the channel model defined in \eqref{eq:system_model_concentrations}, we consider the following three transmission scenarios:
\begin{scenario}
    \item $\Deltac = \mu_{\Delta}$ constant, $N[l] = \mu_{N}$ constant for all symbol intervals $l$.\label{scenario:1}
    \item $\Deltac = \mu_{\Delta}$ constant, $N[l] \sim \textnormal{Lognormal}\Big(\tilde{\mu}_{N}, \tilde{\sigma}^{2}_{N}\Big)$ \ac{iid} for each $l$.\label{scenario:2}
    \item $\Deltac\sim \textnormal{Lognormal}\Big(\tilde{\mu}_{\Delta}, \tilde{\sigma}^{2}_{\Delta}\Big)$ \ac{iid}, $N[l]\sim \textnormal{Lognormal}\Big(\tilde{\mu}_{N}, \tilde{\sigma}^{2}_{N}\Big)$ \ac{iid} for each $l$.\label{scenario:3}
\end{scenario}
Here, $L \sim \textnormal{Lognormal}\left(\mu, \sigma^2\right)$ indicates that \ac{RV} $L$ is log-normal distributed, i.e., $\ln(L)$ is a Gaussian \ac{RV} with mean $\mu$ and variance $\sigma^2$.

\ref{scenario:1} corresponds to the default transmission model considered in most works on (static) diffusion-based \ac{MC}, while \ref{scenario:2} and \ref{scenario:3} introduce additional randomness in the concentrations of the background molecules and desired signaling molecules that can, for example, account for stochastic interference \cite{kuscu_adaptive_receiver_preprint} and \ac{TX} and/or \ac{RX} mobility \cite{ahmadzadeh_mobile_mc_stochastic_model}.
In terms of determining the optimal decision threshold for detection from observations of the received signal, the scenarios become more challenging from \ref{scenario:1} to \ref{scenario:3} as more sources of randomness are added in each scenario.

Throughout this section, $\tilde{\mu}_{\Delta}$,$\tilde{\sigma}^{2}_{\Delta}$, $\tilde{\mu}_{N}$, and $\tilde{\sigma}^{2}_{N}$ are chosen such that $\expectation{}{\Deltac} = \mu_{\Delta} = 10^{20}\uc$, $ \Var{\Deltac} = \sigma^{2}_{\Delta} = \frac{1}{4} \mu^2_{\Delta} =  \left(5 \times 10^{19}\uc\right)^2$, $\expectation{}{N} = \mu_{N}=1.5 \cdot 10^{19}\uc$, and $\Var{N} = \sigma^2_{N} = \frac{1}{4} \mu^2_{N} = \left(7.5 \times 10^{18} \uc\right)^2$.
We choose $ \mu_{\Deltac} = 10^{20}\uc$, as this is the order of magnitude of the peak molecule concentration which is expected if $N=1000$ molecules are released instantaneously from a point source in an unbounded 3-D environment, propagate by diffusion only, and are sampled at a distance of $d=0.75\,\si{\micro\metre}$ from the source (according to \cite[Eq. (6)]{jamali_channel_modeling}).

Now, to generate the receptor occupancy $\Y$ (or rather the number of bound receptors $\Nrb$) from the signaling molecule concentration $C[l]$ for each symbol interval $l$, we assume first that the \ac{RX} cell is equipped with $\nr = 30$ cell surface receptors, i.e., a number of receptors for which detection errors are still likely to occur even when using sophisticated detection methods \cite{kuscu_ml_detectors}.
Then, (i) realizations of $\Deltac$ and $N[l]$ according to one of the scenarios \ref{scenario:1}--\ref{scenario:3} as defined above are drawn, (ii) these realizations are substituted in \eqref{eq:system_model_concentrations} to obtain a realization $\cx$ of $C[l]$ which (iii) is then used to obtain the binding probability $\dfrac{\cx}{\cx+\rrc{-}/\rrc{+}}$ for a single receptor, cf.~\eqref{eq:receptor_binding_probability}.
Finally, (iv) a realization of $\Nrb$ is obtained by drawing from a Binomial distribution, as $\Nrb \sim \textnormal{Binom}\Big(\nr,\dfrac{\cx}{\cx+\rrc{-}/\rrc{+}}\Big)$.
For the binding and unbinding rates of the receptors, we use rate constants $\rrc{+}=2 \cdot 10^{19} \,\si{\meter^3\second^{-1}}$ and $\rrc{-}=20 \,\si{\second^{-1}}$, respectively, which were reported in \cite{kuscu_ml_detectors}.
The parameter values related to the \ac{CRN} implementation are provided in Section~\ref{sec:crn_based_implementation_results}.

\vspace*{-3mm}
\subsection{Threshold Learning}\label{sec:basic_implementation_performance}
\vspace*{-2mm}
In this section, we investigate whether the detectors proposed in Sections~\ref{sec:BMbasedRX} and \ref{sec:adaptiveRX} are able to approach the respective optimal detection thresholds for scenarios \ref{scenario:1}--\ref{scenario:3} from either offline training (BM-based detector) or pilot symbol transmissions (adaptive detector).
For the detector designs proposed in Sections~\ref{sec:BMbasedRX} and \ref{sec:adaptiveRX}, detection errors may happen (a) if the detectors implement non-optimal thresholds or (b) if the sample average of the surrogate \ac{RV} $\hat{X}$ as computed by the chemical detectors, cf.~\eqref{eq:system_model:sampling}, differs from the true probability $\Pr\left[\hat{X} | \vec{Y} \right]$.
In order to isolate the threshold learning performance of the respective detector, i.e., (a), from the randomness introduced by its chemical implementation, i.e., (b), we consider in this section simplified versions of the proposed detectors for which the detection decisions depend not on their proposed chemical realizations.
In the following, we first provide some details on how the separation between (a) and (b) is achieved in the simulations, before we present the results of the evaluation.

\subsubsection{Learning Simulation}\label{sec:simulations_procedure}

The numerical results presented in the following subsection were obtained based on 25 realizations of the \ac{BM}-based and the adaptive detector, respectively, where each realization was trained independently on $\nsamples$ random training samples / pilot symbols, if not stated explicitly otherwise.
The \acp{BER} for each detector were then obtained empirically based on the detection of $10^6$ data symbols with each of the respective detector realizations.

\paragraph{\ac{BM}-based Detection}
The \acp{BM} are initialized with all-zero biases $\thetavec$ and random weight matrices $\W$. 
Furthermore, each matrix $\W$ is constructed from another random matrix $\W'=\frac{1}{2}(\V+\V^\transpose) \in \mathbb{R}^{(\nr+1)\times(\nr+1)}$, where the entries of $\V$ are \ac{iid} Gaussian \acp{RV} with zero mean and variance $\frac{1}{\nr+1}$, by setting the diagonal elements of $\W'$ to zero.
Next, the \acp{BM} are trained based on a two-step procedure.
First, $\nsamples$ training samples are generated and utilized to estimate the first- and second-order moments of $\Z=\left[X \; \Y\right]^\transpose$ as $\expectation{}{\Z} \approx \frac{1}{\nsamples}\sum_{i=1}^{\nsamples} \z_i$ and $\expectation{}{\Z\Z^\transpose} \approx \frac{1}{\nsamples}\sum_{i=1}^{\nsamples} \z_i \z_i^{\transpose}$, respectively, where $\z_i$ denotes the $i$-th training sample.
Second, using these estimates, the \ac{BM} is trained for 100 steps.
To this end, the first- and second-order moments of the current \ac{BM} configuration are estimated in each \ac{BM} training step using the Gibbs sampling algorithm (see \cite{mackay2003information} for an overview) which is run for $5 \cdot 10^4$ steps after discarding the first 500 samples.
Then, as previously discussed in Section~\ref{sec:bm:background} and formally defined in \cite[Eq. (4)]{poole2017_chemical_BMs}, the weight matrix $\W$ and bias vector $\thetavec$ are updated accordingly.
Finally, during the evaluation phase, $\Pr[\hat{X}=1|\Y=\y]$ is estimated using $5 \cdot 10^2$ Gibbs samples.

\paragraph{Adaptive Detection}
The thresholds of the adaptive detectors are initialized with $\nwa[0]=0$.
Then, learning rule \eqref{eq:update_rule} is applied for each pilot symbol realization.
During the evaluation phase, \eqref{eq:adaptive_rx_theorem_2_Pr} is utilized directly in order to determine the detection probability $\Pr\left[\hat{X}=1|\Y\right]$ and the resulting {\em hypothetical \ac{BER}}, i.e., chemical decision making and chemical learning are both not considered here.
Furthermore, we compute the expected \acp{BER} after different numbers of pilot symbol transmissions and in steady state, respectively, as obtained from a theoretical analysis of \eqref{eq:update_rule} based on the theory of discrete-state Markov processes.
The corresponding analytical expressions are provided in \version{Appendices~\ref{sec:app:dtmc} and \ref{sec:app:steady_state}}{the appendix of the extended arxiv version of this paper \cite{bahe_crn_jp_arxiv}} and referred to as {\em deterministic baseline} ($\nsamples$ finite) and {\em asymptotic deterministic baseline} ($\nsamples \to \infty$) in the following, since they reflect the \acp{BER} that could ideally be expected on average if \eqref{eq:update_rule} was applied deterministically for many different receptor occupancy realizations (ensemble average).

\subsubsection{Learning Evaluation}
\begin{figure}[t]
    \vspace*{-18mm}
    \begin{minipage}{0.49\textwidth}
        \centering
        \includegraphics[width=3.4in]{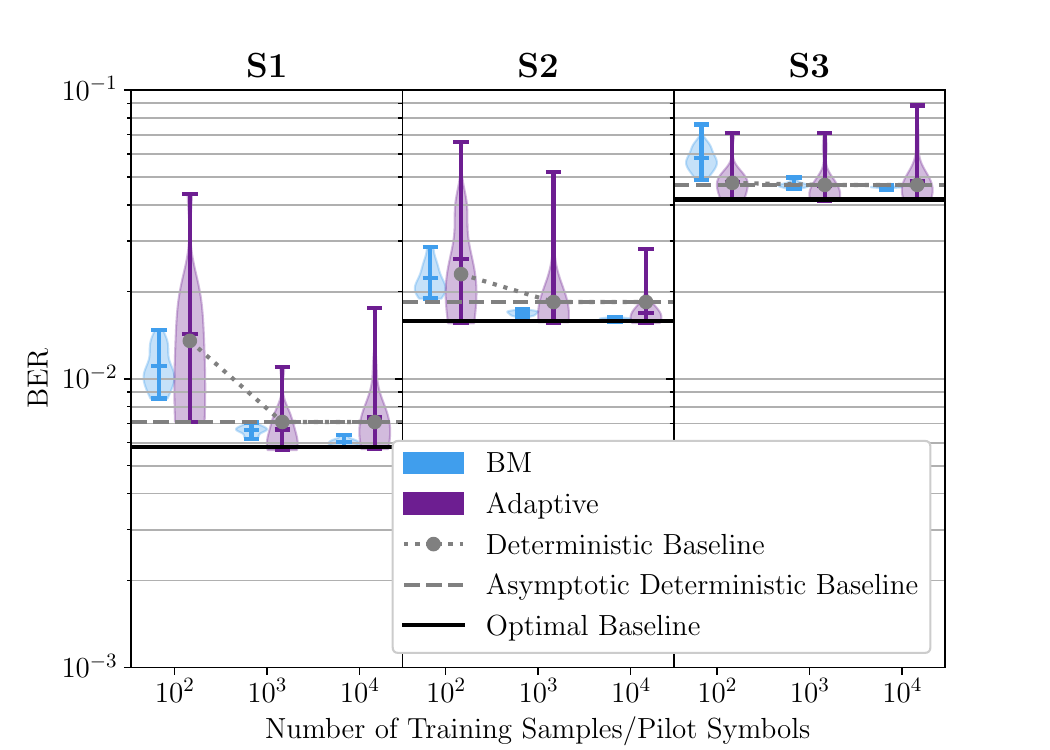}
    \end{minipage}
    \hspace*{-8mm}
    \begin{minipage}{0.50\textwidth}
        \vspace*{-5mm}
        \caption{\textbf{Detector comparison.} Hypothetical \acp{BER} after training the \ac{BM}-based (blue) and adaptive (purple) detectors with $\nsamples = 10^2$, $\nsamples = 10^3$, and $\nsamples = 10^4$ training samples/pilot symbols if suboptimal detection was only due to mismatched thresholds. The upper, center, and lower bar indicate the maximum, mean, and minimum \acp{BER} of the 25 detectors considered for each scenario. The deterministic and asymptotic deterministic baselines, as defined in the main text, are shown by the dotted and dashed grey lines, respectively. For reference, the \acp{BER} achievable by the respective \ac{MAP} detectors are shown as solid black lines.}
        \label{fig:in_silico}
    \end{minipage}
    \vspace*{-10mm}
\end{figure}

Figure~\ref{fig:in_silico} shows the hypothetical detection performance at different levels of training ($\nsamples = 10^2$, $\nsamples = 10^3$, and $\nsamples = 10^4$) for each of the three transmission scenarios defined in Section~\ref{sec:simulations_setup} and the two detectors proposed in Sections~\ref{sec:BMbasedRX} and \ref{sec:adaptiveRX}, respectively.
In addition, the \acp{BER} achievable by the respective \ac{MAP} detectors for each scenario, which constitute lower bounds on the detection error, are provided.
Finally, Figure~\ref{fig:in_silico} also shows the deterministic and the asymptotic deterministic baseline, respectively, as defined above.

From Figure~\ref{fig:in_silico}, we observe that the expected \acp{BER} of all detectors increase from Scenario \ref{scenario:1} to Scenario \ref{scenario:3}, which is expected due to the additional sources of randomness in each scenario.
Furthermore, Figure~\ref{fig:in_silico} reveals that in all considered scenarios the respective \acp{BER} decrease on average for both the \ac{BM}-based detector and the adaptive detector when $\nsamples$ is increased.
Furthermore, we observe that, as $\nsamples$ increases, the mean \acp{BER} obtainable with the \ac{BM}-based detector approach the performance of the corresponding \ac{MAP} detectors; also, the mean \acp{BER} of the adaptive \ac{RX} agree well with the theoretical analysis in the transient regime ($\nsamples$ finite), eventually approaching the theoretically predicted steady state.
Here, the offsets between the performance of the \ac{MAP} detectors and the asymptotically achievable performance of the adaptive detectors (asymptotic deterministic baseline) result from the fact that, as discussed in Section~\ref{sec:in_silico_learning}, the threshold update as defined for the adaptive detector in \eqref{eq:update_rule} is suboptimal.

We also observe from Fig.~\ref{fig:in_silico} that the \acp{BER} obtained with the different realizations of the adaptive detector are subject to stronger stochastic fluctuations as compared to the ones obtained from the \ac{BM}-based detectors.
In particular, the differences between the lowest and highest \ac{BER} of any simulated \ac{BM}-based detector, as indicated by the blue error bars in Fig.~\ref{fig:in_silico}, decrease for all considered transmission scenarios as $\nsamples$ increases.
The adaptive detectors, in contrast, exhibit comparatively stronger fluctuations of their \acp{BER} in all scenarios and even for the largest considered $\nsamples$, this variability does not vanish.
This observed sensitivity of the adaptive detectors towards random fluctuations in the learning process is again due to the choice of the {\em ad hoc} update rule \eqref{eq:update_rule}.
However, we will see later on that the end-to-end performance of the proposed fully chemical \ac{RX} is still competitive, while \eqref{eq:update_rule} facilitates a low complexity chemical implementation.

We conclude from the discussion in this section that the \ac{BM}-based detectors show in general superior performance and more robustness towards stochastic fluctuations as compared to the adaptive detectors as far as their ability to learn the respective optimal thresholds is concerned.
However, the training of the \ac{BM}-based detectors depends on the assumption that the channel statistics are, in some form, available offline and do not change over time.
In the next section, we will consider the realistic case as here this assumption is not fulfilled and focus on the adaptive \ac{RX}.

\vspace*{-5mm}
\subsection{CRN-based Implementation of the Adaptive Detector}\label{sec:crn_based_implementation_results}
\vspace*{-2mm}
In the remainder of this section, we study the performance of the fully chemical implementation of the adaptive \ac{RX} as described in detail in Section~\ref{sec:timing}.
First, the setup of the \ac{CRN} simulations is discussed in Section~\ref{sec:evaluation:adaptive_detector:simulation_setup}.
Then, the chemical realizations of the proposed timing mechanism and the pilot symbol-based threshold learning are studied in Sections~\ref{sec:evaluation:adaptive_detector:timing} and \ref{sec:evaluation:adaptive_detector:learning}, respectively.
Finally, the end-to-end performance of the proposed fully chemical \ac{RX} in terms of its expected \ac{BER} is studied in Section~\ref{sec:evaluation:adaptive_detector:ber}

\subsubsection{Simulation Setup}\label{sec:evaluation:adaptive_detector:simulation_setup}
All considered \acp{CRN} are simulated using the \ac{MNRM} algorithm \cite{thanh_MNRM}\footnote{The widely used Gillespie algorithm \cite{gillespie1977exact} is not directly applicable here, as its applicability hinges on the assumption of fixed rate constants. Hence, we employ the \ac{MNRM} which can account for time-varying rate constants such as those that occur in different reactions catalyzed by external signals considered in this paper (cf.~Section~\ref{sec:CRNprimer}).}.
Furthermore, in all simulations the initial molecule counts specified in Table~\ref{tab:initial_molecule_counts} and the normalized rate constants shown in Table~\ref{tab:rate_constants} are used unless noted otherwise.
As discussed in Section~\ref{sec:CRNprimer}, the normalized rate constants provided in Table~\ref{tab:rate_constants} can be scaled by a common factor without affecting the equilibrium state of the corresponding \acp{CRN} and this degree of freedom may by exploited in practice to realize their physical implementation.
We note that the reaction rate constants in Table~\ref{tab:rate_constants} were not optimized systematically, but rather chosen {\em ad hoc}.
While optimizing the (relative) rate constants in a systematic manner may possibly result in improved system performance, this is a non-trivial problem and left for future work.

Now, in order to simulate the considered \ac{MC} system, consecutive symbol intervals of normalized length $\Tsym=15$ each are considered.
For each symbol interval, one realization of the receptor occupancy $\Nrb$ is drawn according to the transmitted symbol and the current channel state.
\mbox{Furthermore, at} the beginning of each symbol interval, \ac{STES} and, if applicable, \ac{PMES} and \ac{DMES}, are applied for a duration of $\Ta=0.5$ in order to initialize either the pilot or the data mode.
As discussed in Section~\ref{sec:flip_flop}, whenever a pilot sequence is transmitted, it is of the form $\langle0,1,0,1,\dots \rangle$, i.e., alternating $0$'s and $1$'s, while the data symbols are selected independently and with equal probability in each symbol interval.

\begin{table}[!t]
    \centering
    \vspace{-10mm}
\begin{minipage}[t]{0.48\linewidth}\centering
    \caption{Initial Molecule Counts.}
    \vspace*{-2mm}
    \begin{tabular}{|c|c|c|c|c|c|}
        \hline
        $\Aon$ & $\Aoff$ & $\Ton$ & $\Toff$ & $\Ptrain$ & $\Ptransmit$ \\
        $0$ & $10$ & $0$ & $250$ & $10$ & $0$ \\
        \hline

        $\Don$ & $\Doff$ & $\Db$ & $\Ron$ & $\Roff$ & $\Rb$ \\
        $0$ & $50$ & $0$ & $50$ & $0$ & $0$ \\
        \hline

        $\Lon$ & $\Loff$ & $\Ion$ & $\Ioff$ & $\XtrueOn$ & $\XtrueOff$ \\
        $1$ & $0$ & $1$ & $0$ & $1$ & $0$ \\
        \hline

        \multicolumn{2}{|c|}{$\Won$} & $\XhatCon$ & $\XhatCoff$ & \multicolumn{2}{|c|}{$\Hmol$} \\
        \multicolumn{2}{|c|}{$0$} & $0$ & $0$ & \multicolumn{2}{|c|}{$0$}  \\
        \hline
    \end{tabular}
    \label{tab:initial_molecule_counts}
\end{minipage}
\hfill%
\begin{minipage}[t]{0.48\linewidth}\centering
    \caption{Normalized Reaction Rate Constants.}
    \vspace*{-2mm}
    \begin{tabular}{|c|c|c|c|c|}
    \hline
        $\rrc{a}$ & $\rrc{dea}$ & $\rrc{t,on}$ & $\rrc{t,off}$ & $\rrc{FF,1},\rrc{FF,2}$ \\
        $10^1$ & $10^{-1}$ & $10^{-2}$ & $2\cdot 10^{-2}$ & $5\cdot 10^{1}$ \\

        \hline
        $\rrc{D,i}$ & $\rrc{D,e}$ & $\rrc{D,R}$ & $\rrc{pilot}$ & $\rrc{FF,3},\rrc{FF,5}$ \\
        $10^0$ & $8 \cdot 10^{-1}$ & $5 \cdot 10^{-1}$ & $10^2$ & $10^{1}$ \\

        \hline
        $\rrc{R,i}$ & $\rrc{R,e}$ & $\rrc{R,R}$ & $\rrc{data}$ & $\rrc{FF,4}$ \\
        $10^0$ & $10^{-1}$ & $10^1$ & $10^2$ & $10^5$ \\

        \hline
        $\rrc{H}$ & $\rrc{H,deg}$ & $\rrc{l,1}$ & $\rrc{l,2}$ & $\rrc{FF,R}$ \\
        $10^{-1}$ & $5 \cdot 10^2$ & $1 \cdot 10^{-4}$ & $1 \cdot 10^{-4}$ & $10^1$ \\

        \hline
        $\rrc{on}$ & $\rrc{on,deg}$ & $\rrc{off}$ & $\rrc{off,deg}$ & $\rrc{deg}$ \\
        $5\cdot10^{0}$ & $10^{-1}$ & $5 \cdot 10^{0}$ & $10^{-1}$ & $5\cdot10^{4}$\\
        \hline
    \end{tabular}
    \label{tab:rate_constants}
\end{minipage}
\end{table}

\subsubsection{Chemical Timing Mechanism}\label{sec:evaluation:adaptive_detector:timing}
\begin{figure}[t]
    \vspace*{-16mm}
    \begin{minipage}{0.55\textwidth}
        \centering
        \includegraphics[width=3.4in]{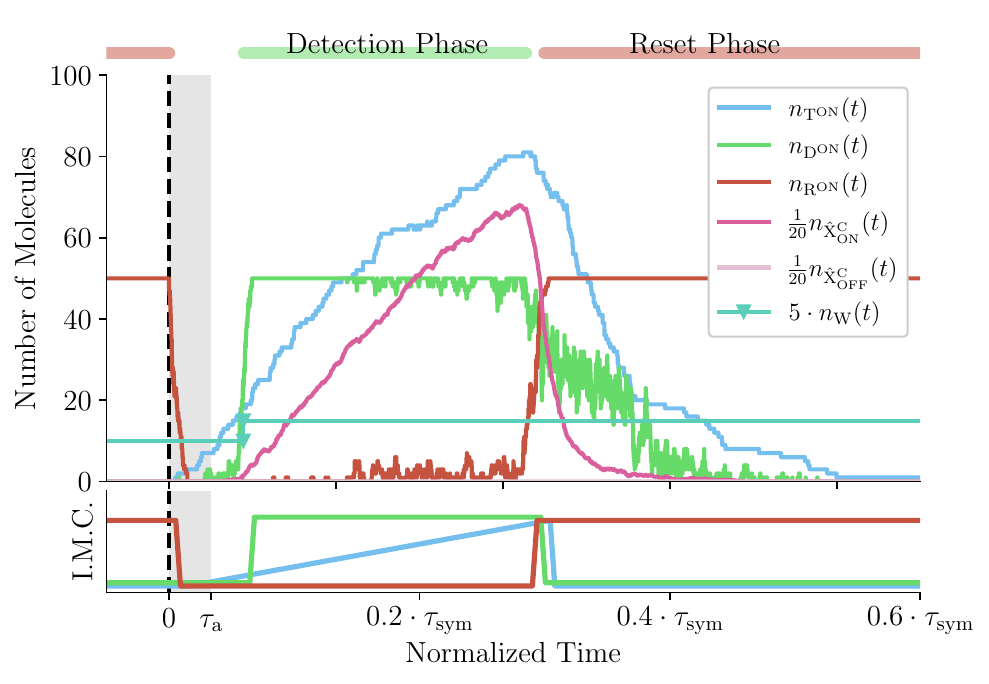}
        \vspace{-5mm}
        \caption{\textbf{Exemplary symbol interval.} \textit{Top}: After the symbol interval starts (dashed black line), \ac{STES} (grey shaded) generates $\Aon$ molecules (not shown) which in turn catalyze the production of timer molecules $\Ton$ (blue). After some time, first the detection phase (green $\Don$ molecules) and later the reset phase (red $\Ron$ molecules) start. In this interval $\xtrue=0$, but $\xhat=1$ (many pink $\XhatCon$ molecules are produced). Thus, $\nwa$ is adapted by generating an additional weight molecule $\Won$ (turquoise, rescaled for better visibility) is generated. The molecule counts of $\XhatCoff$ are barely visible since the corresponding curve (light pink) almost coincides with the $x$ axis. \textit{Bottom}: Idealized Molecule Counts (I.M.C.) for this symbol interval.}
        \label{fig:symbol_interval}
    \end{minipage}
    \hfill
    \begin{minipage}{0.44\textwidth}
        \centering
        \vspace*{0mm}
        \hspace*{-5mm}
        \includegraphics[width=3.4in]{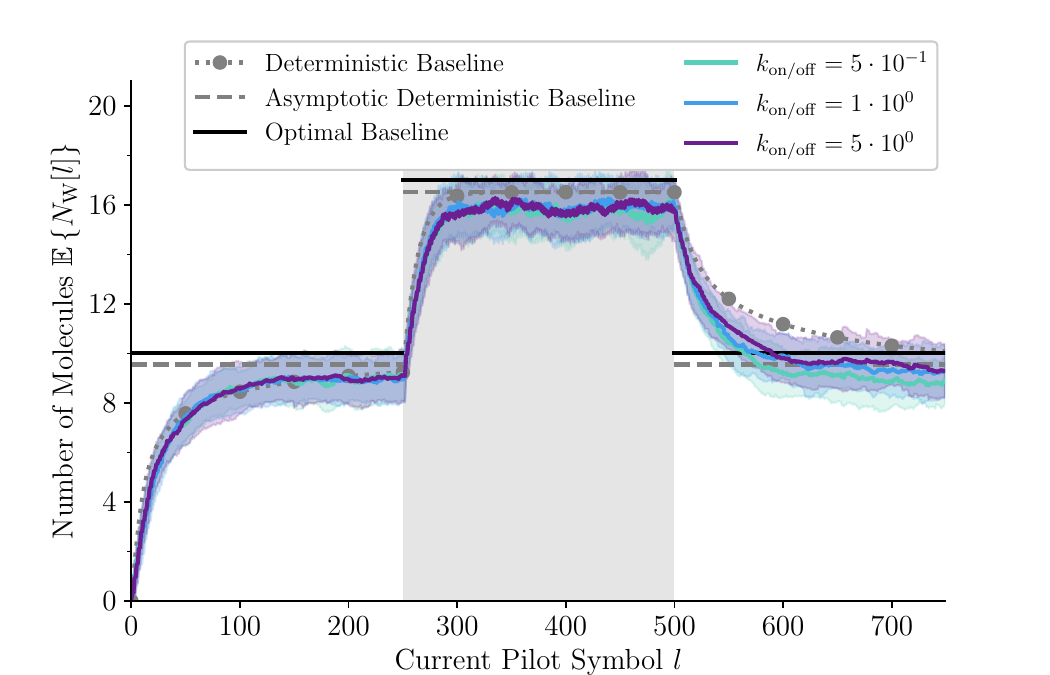}
        \vspace*{-10mm}
        \caption{
        \textbf{Time-variant channel.} Evolution of $\expectation{}{N_{\Won}[l]}$ of the chemically implemented adaptive detector for different rate constants $\kon$, $\koff$ (green, blue, purple). For reference, the deterministic, deterministic asymptotic, and optimal baseline (dotted grey, dashed grey, and solid black lines) are shown. The grey-shaded background indicates an increased background noise level.}
        \label{fig:Won_evolution}
    \end{minipage}
    \vspace*{-10mm}
\end{figure}

In order to validate the correct functioning of the chemical timing mechanism proposed in Section~\ref{sec:timing}, we study the time-dependent evolution of molecule counts within one symbol interval.
Specifically, Fig.~\ref{fig:symbol_interval} shows the evolution of molecule counts of the key species $\Ton$, $\Don$, $\Ron$, $\Won$, $\XhatCon$, and $\XhatCoff$ over the duration of one symbol interval after the transmission of pilot symbol $0$.
The bottom panel of Fig.~\ref{fig:symbol_interval} shows schematically how the molecule counts of the three molecule species representative of (i) the time elapsed since the beginning of the symbol interval ($\Ton$), (ii) the detection phase ($\Don$), and (iii) the reset phase ($\Ron$), should evolve ideally.
As depicted in Fig.~\ref{fig:symbol_interval}, the number of $\Ton$ molecules should ideally increase monotonically starting from the beginning of the symbol interval (dashed black line) triggering first the beginning of the detection phase (indicated by green bar on top) and later the beginning of the reset phase (red bars on top) which also ends the detection phase.
The detection phase is ideally characterized by an elevated level of $\Don$ molecules, while during the reset phase an increased level of $\Ron$ molecules is expected.

The top panel of Fig.~\ref{fig:symbol_interval} displays the temporal evolution of the actual molecule counts as obtained via computer simulation for one exemplary symbol interval.
First, we observe here that all three simulated molecule counts, i.e., $n_{\Ton}(t)$, $n_{\Don}(t)$, and $n_{\Ron}(t)$, show general trends that are in excellent agreement with the respective idealized molecule counts.
Specifically, $n_{\Ton}(t)$ increases monotonically starting from the beginning of the symbol interval until the end of the detection phase, while $n_{\Don}(t)$ and $n_{\Ron}(t)$ are significantly elevated during the detection and reset phases, respectively.
Comparing $n_{\Ton}(t)$, $n_{\Don}(t)$, and $n_{\Ron}(t)$ from the top panel with the corresponding idealized molecule counts in the bottom panel of Fig.~\ref{fig:symbol_interval}, we observe the impact of the randomness inherent in the chemical reactions which manifests itself in high frequency fluctuations in the simulated molecule counts.
Also, all chemical reactions occur at finite rates, as can be seen from the longer decay times of $n_{\Ton}(t)$ and $n_{\Don}(t)$ after the end of the detection phase compared to the idealized case.
However, despite these practical imperfections, Fig.~\ref{fig:symbol_interval} shows that $n_{\Don}(t)$ and $n_{\Ron}(t)$ correlate strongly with the detection and reset phase, respectively, validating the fundamental design of the proposed chemical timing mechanism.

Finally, the top panel in Fig.~\ref{fig:symbol_interval} illustrates also the simulated molecule counts of $\mol{W}$, $\XhatCon$, and $\XhatCoff$ in response to a transmitted pilot symbol $\xtrue=0$.
Furthermore, the decision threshold at the beginning of the interval, i.e., $n_{\mol{W}}$, equals $2$.
Now, we observe from Fig.~\ref{fig:symbol_interval} that many more $\XhatCon$ molecules than $\XhatCoff$ molecules are generated (since in the considered symbol interval the number of occupied receptors exceeds $2$).
Furthermore, the count of the latter ones is very close to $0$ most of the time.
Since $n_{\XhatCon} \gg n_{\XhatCoff}$ while $\xtrue=0$, a detection error occurs and, consequently, the chemical learning mechanism \eqref{eq:adaptive_learning_reaction} is triggered.
This ultimately results in the \mbox{increase in $n_{\mol{W}}$ observable in Fig.~\ref{fig:symbol_interval}.}

Our observations confirm that the chemical timing mechanism proposed in Section~\ref{sec:timing} operates as expected.
Furthermore, at any time during the considered symbol interval, relatively few molecules were used to perform computations while the number of molecules considered in order to execute comparably complex computations is commonly much larger in \ac{MC} systems.
Hence, the presented results confirm that the proposed chemical \ac{RX} architecture can in principle enable detection. Moreover, adaptation of the decision threshold to the previously unknown channel even at the low molecule counts that are practically feasible for cellular \acp{RX} is realized.
Since the adaptation part appears especially relevant for practical applications, we will study it in more depth in the following subsection.

\subsubsection{Adaptation to Time-variant Channel}\label{sec:evaluation:adaptive_detector:learning}
\label{sec:simulations_time_variant_channel}
In order to evaluate the ability of the proposed fully chemical adaptive \ac{RX} to adapt to a time-variant channel, a long sequence of $750$ pilot symbols is transmitted assuming Scenario~$\ref{scenario:1}$\footnote{In order to avoid desynchronization issues of the chemical flip-flop that may arise for unusually long pilot sequences, the chemical flip-flop is reset here every second symbol, i.e., each value '0' of $\xtrue$ is enforced by applying external signal \ac{PMES}.}.
Furthermore, after 250 symbols, $N[l]$ is increased by a factor of 5 and after 250 additional symbols reset back to its initial value, i.e., $N[l]=\mu_N$ for $0 \leq l < 250$, $N[l]=5 \mu_N$ for $250 \leq l < 500$, and $N[l]=\mu_N$ for $500 \leq l < 750$.

Fig.~\ref{fig:Won_evolution} illustrates how the expected number of weight molecules $n_{\mol{W}}$ changes over time during the considered pilot symbol transmission when averaged over different stochastic simulations of the proposed adaptive \ac{RX}.
For comparison, different baselines are provided in Fig.~\ref{fig:Won_evolution}.
One baseline, referred to as the {\em optimal baseline} in the following, corresponds to the respective \ac{MAP} thresholds for the different considered channel conditions.
The other two baselines correspond to the deterministic and the asymptotic deterministic baselines introduced in Section~\ref{sec:basic_implementation_performance}.

In Figure~\ref{fig:Won_evolution}, the expected number of weight molecules of the \ac{CRN}-based implementation of the adaptive \ac{RX} after different numbers of pilot symbol transmissions are shown for different rate constants $\rrc{on}$, $\rrc{off}$.
The thick lines indicate the observed mean values of $\Nwa$ after the transmission of each pilot symbol and the shaded regions extend over one standard deviation \mbox{above and below the mean}.

We first observe from Fig.~\ref{fig:Won_evolution} that the agreement of the proposed adaptive \ac{RX} with the deterministic baseline is very good.
This observation confirms that the performance losses introduced by the chemical implementation as compared to the deterministic realisation of the proposed algorithm are not severe.
Second, we observe that the chemical implementation eventually approaches the theoretical baseline closely, confirming that the proposed adaptive \ac{RX} is capable of learning the optimal decision threshold from pilot symbol transmissions.
As we observe from Fig.~\ref{fig:Won_evolution} this is true not only for the initial learning phase, but also after the channel conditions and hence the optimal decision threshold change at $l=250$ and $l=500$, respectively.
These observations confirm that the proposed fully chemical adaptive \ac{RX} successfully adapts to a changing environment by adjusting its decision threshold to the time-varying channel conditions based on the transmission of pilot symbols.
Finally, Fig.~\ref{fig:Won_evolution} confirms the robustness of the proposed chemical \ac{RX} architecture: The number of weight molecules does not depend too much on the exact values of $\kon$/$\koff$.

\subsubsection{Performance Analysis of Fully Chemical Adaptive RX}\label{sec:evaluation:adaptive_detector:ber}
\begin{figure}[t]
    \vspace*{-10mm}
    \begin{minipage}{0.58\textwidth}
        \centering
        \vspace*{-7mm}
        \includegraphics[width=1.0\linewidth]{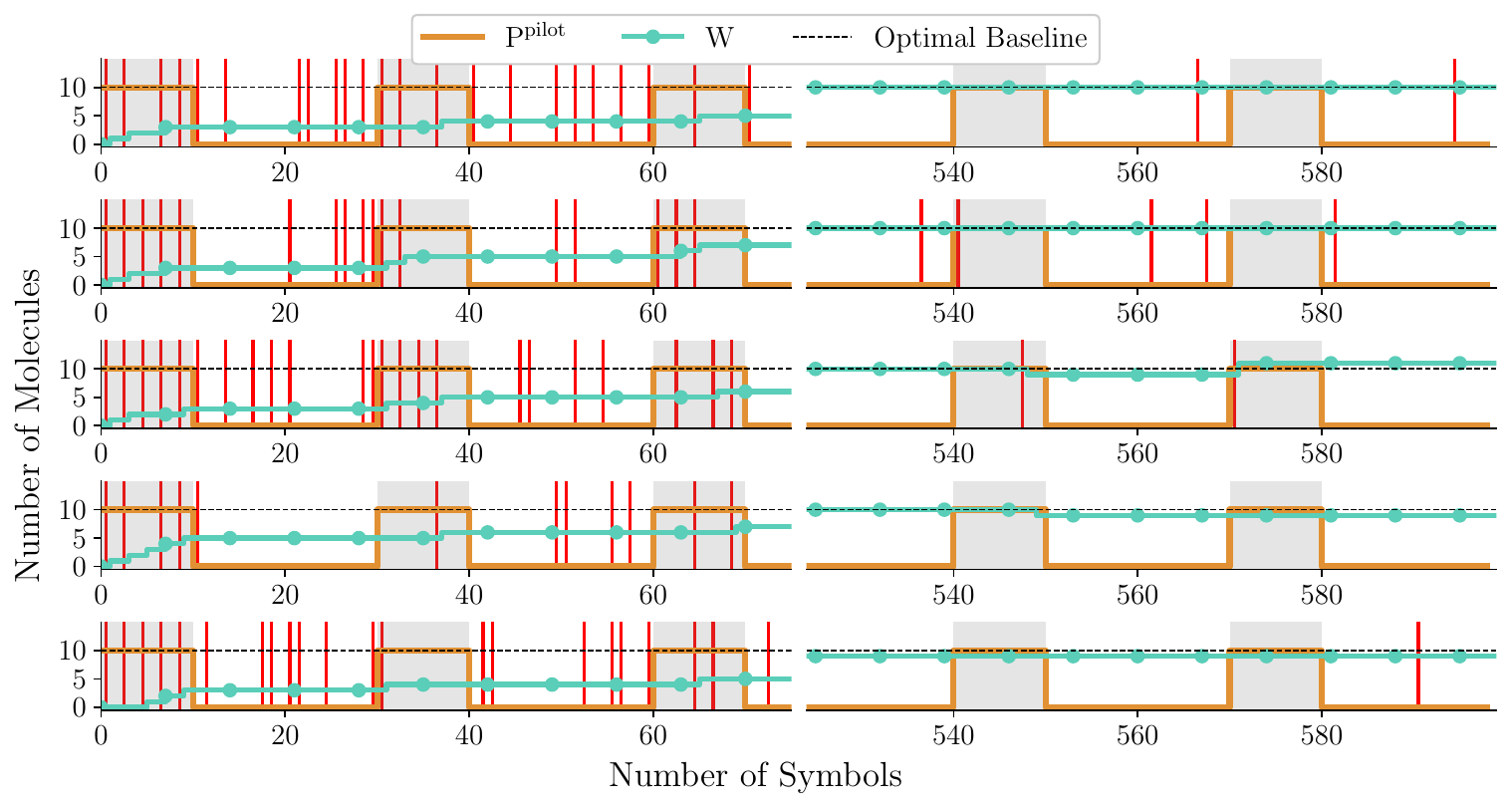}
        \vspace*{-10mm}
        \caption{\textbf{Alternating pilot and data phases.} The number of $\Won$ molecules (solid turquoise) is compared to the optimal decision threshold (dashed black) for five simulation runs with alternating pilot (grey shaded) and data (white background) intervals. The red vertical lines indicate bit errors in the respective symbol interval.}
        \label{fig:train_transmit}
    \end{minipage}
    \hfill
    \begin{minipage}{0.41\textwidth}
        \centering
        \includegraphics[width=1.0\linewidth]{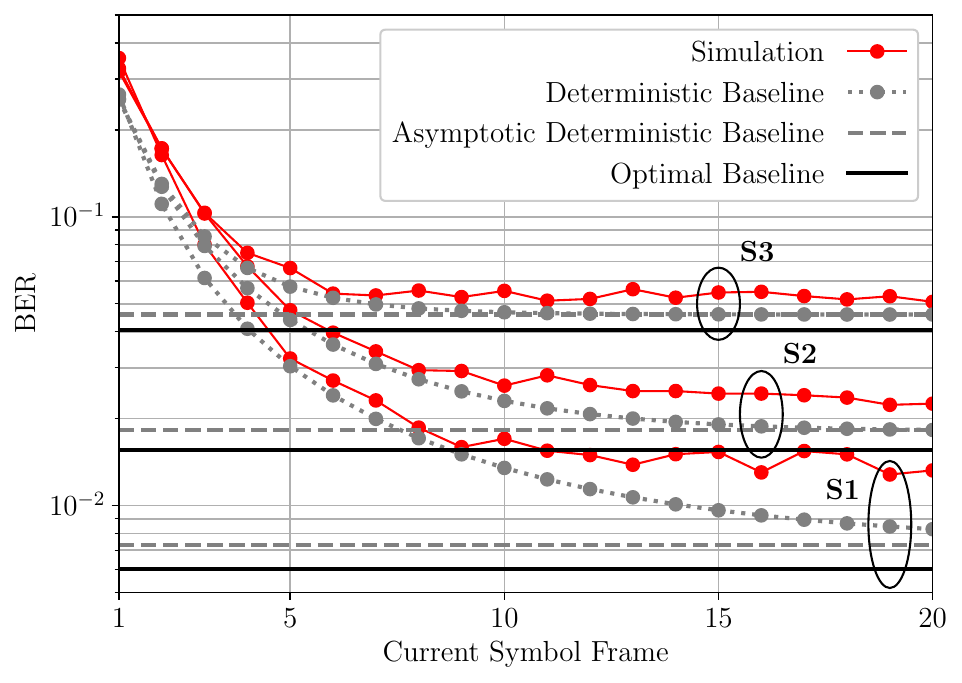}
        \vspace*{-10mm}
        \caption{\textbf{BER during data phases.} The evolution of the data-symbol \ac{BER} as obtained from the fully chemical implementation (red) is compared to the deterministic, asymptotic deterministic, and optimal baseline (dotted grey, dashed grey, and solid black lines) for all three scenarios.}
        \label{fig:evaluation:adaptive_detector:ber}
    \end{minipage}
    \vspace*{-10mm}
\end{figure}

To assess the performance of the fully chemical implementation of the proposed adaptive \ac{RX}, we consider the transmission of 20 frames, each comprising 10 pilot symbols followed by 20 data symbols, i.e., 30 symbols in total are transmitted per frame.
Furthermore, in order to define the \ac{BER} for the considered system, we first recall from Section~\ref{sec:adaptive_chemical_decisions} that the presence of $\XhatCoff$ and $\XhatCon$ is indicative of detection decisions $\xhat=0$ and $\xhat=1$, respectively, i.e., the detector output is not strictly binary.
Moreover, since the molecule counts of $\XhatCoff$ and $\XhatCon$ are subject to stochastic fluctuations, the detection decision is in general (for finite reaction rate constants) also probabilistic.
Accordingly, the \ac{BER} can be defined as the average probability to observe a molecule indicative of the wrong detection decision.
To this end, let $t_s$ and $t_e$ denote the start and end times of any symbol interval.
Then, for this interval, the probability of observing $\XhatCon$ relative to observing $\XhatCoff$, $\ponrel$, is defined as
\begin{equation}
    \ponrel = \frac{\int_{t_{\mathrm{s}}}^{t_\mathrm{e}} n_{\XhatCon}(t) \mathrm{d}t }{ \int_{t_{\mathrm{s}}}^{t_\mathrm{e}} n_{\XhatCon}(t) \mathrm{d}t + \int_{t_{\mathrm{s}}}^{t_\mathrm{e}} n_{\XhatCoff}(t) \mathrm{d}t },
\end{equation}
and the probability of wrong detection for this symbol interval equals $|x - \ponrel|$, where $x \in \{0,1\}$ denotes the transmitted symbol in this interval.
Averaging over all data symbols in one data frame and all realizations of the detector finally yields the \ac{BER}.

Fig.~\ref{fig:train_transmit} illustrates false detections along with the evolution of $\mol{W}$ molecules for $5$ different exemplary simulation runs for Scenario~$\ref{scenario:1}$ at the beginning and the end of the transmission of the $20$ frames, respectively.
Each horizontal panel in Fig.~\ref{fig:train_transmit} corresponds to $1$ simulation run.
For each simulation run and each transmitted symbol, Fig.~\ref{fig:train_transmit} displays a red vertical line whenever $|x - \ponrel| > 10^{-3}$.

First, we observe from Fig.~\ref{fig:train_transmit} that for all $5$ simulation runs errors are more likely to occur at the beginning than towards the end of the transmission.
Furthermore, we observe that during the transmission of pilot symbols (grey shaded symbol intervals) the number of $\Ptrain$ molecules (orange lines) is increased, enabling the adjustment of $n_{\mol{W}}$ whenever detection errors occur.
In contrast, when detection errors occur during information transmission, no such adaptations take place.
From the right part of Fig.~\ref{fig:train_transmit}, we observe that in all simulation runs, $n_{\mol{W}}$ is eventually very close to the theoretically optimal (\ac{MAP}) threshold, leading to very few detection errors.
In the following, we put this observation on a more quantitative basis.

Fig.~\ref{fig:evaluation:adaptive_detector:ber} shows the \acp{BER} as defined above obtained for stochastic simulations of the adaptive detector for the three different transmission scenarios introduced at the beginning of this section.
Specifically, for each transmitted frame, the simulated \acp{BER} in Fig.~\ref{fig:evaluation:adaptive_detector:ber} were obtained by averaging approximately 1000 independent realizations.
Furthermore, Fig.~\ref{fig:evaluation:adaptive_detector:ber} shows also the \acp{BER} obtained from the deterministic baselines and the asymptotic deterministic baselines as well as the expected \acp{BER} when the respective \ac{MAP} thresholds are used for detection.

First, we observe from Fig.~\ref{fig:evaluation:adaptive_detector:ber} that for all transmission scenarios the simulated \acp{BER} decrease over the course of the transmission.
As already observed for Scenario~\ref{scenario:1} in Fig.~\ref{fig:train_transmit}, this decay is due to the detectors adjusting their initial thresholds towards the optimum decision threshold for the respective scenario.
Furthermore, we observe that the decay of the simulated \acp{BER} follows closely the decay of the deterministic baseline, confirming that the error introduced by the proposed chemical implementation as compared to the performance of the proposed detector under deterministic conditions is not severe.
We also observe from Fig.~\ref{fig:evaluation:adaptive_detector:ber} that the performance loss in terms of the \ac{BER} introduced by the suboptimal threshold adaptation scheme \eqref{eq:update_rule} (as represented by the asymptotic deterministic baseline) is on the order of $10^{-3}$ for Scenarios \ref{scenario:1} and \ref{scenario:2}, and on the order of $10^{-2}$ for Scenario \ref{scenario:3}. This performance loss is clearly acceptable given the extremely low complexity of \eqref{eq:update_rule} and the versatility of the proposed detector.
Finally, we observe that the performance loss of the proposed fully chemical adaptive \ac{RX} as compared to the \ac{MAP} baseline after the transmission of only $20$ symbol frames is already on the order of $10^{-2}$ for all considered scenarios.

In summary, the results presented in this section confirm that the proposed fully chemical \ac{RX} does indeed deliver competitive performance in terms of the achieved \ac{BER} while at the same time successfully adapting to previously unknown channels.
This result is particularly satisfying since the proposed \ac{RX} realizes its complex functionality in a purely chemical fashion and the complexity of the involved chemical processes is relatively low as compared to existing approaches.
In light of these considerations, we believe that the proposed chemical \ac{RX} design will contribute towards the realization of collaboratively operating adaptive nanodevices in the context of the \ac{IoBNT}.

\vspace*{-2mm}
\section{Conclusion}\label{sec:conclusion}
\vspace*{-2mm}
In this paper, we presented a practically implementable architecture for cellular receivers based on \acp{CRN}. We introduced \ac{CRN}-based implementations of two detection mechanisms, one based on \acp{BM} that can be trained before deployment, and an adaptive one that can be trained using pilot symbols after deployment. Furthermore, we introduced a fully chemical realization of the proposed adaptive \ac{RX} including a \ac{CRN}-based timing mechanism to coordinate the different computation phases. We validated the proposed \ac{RX} with extensive stochastic simulations and provided a theoretical error analysis of the proposed adaptive training algorithm. 

Our results confirm that both proposed \ac{RX} designs learn (nearly) optimal decision thresholds when exposed to training data.
For the proposed adaptive \ac{RX} design, we showed that the proposed chemical timing mechanism is able to successfully schedule the different operation modes of the \ac{RX}.
Furthermore, we demonstrated that the adaptive \ac{RX} successfully adapts to changing channel states by adjusting its decision threshold.
Finally, we confirmed by extensive computer simulations that the proposed adaptive \ac{RX}, despite its relative simplicity and high versatility, delivers excellent performance when compared to the theoretical optimum; we were able to show that this last observation is even true for very challenging channel conditions in which both the background noise and the signaling molecule concentration change randomly from symbol interval to symbol interval.

In summary, we proposed an \ac{RX} architecture that is well-suited to narrow the implementation gap as we jointly design detection mechanisms and their chemical implementation. The practical relevance of our \ac{RX} is ensured by the explicit consideration of practical constraints like accounting for scenarios without mathematically tractable channel models, addressing the issue of synchronization, and aiming for a simple architecture with few chemical species and reactions.

We foresee four main future research directions. First, the external signals used for synchronization could be modeled in more detail, e.g., to account for background illumination. Second, the threshold adaptation for the adaptive \ac{RX} could be further improved, e.g., by allowing larger step sizes to speed up convergence. Third, the proposed timing mechanism might be used to facilitate the chemical implementation of other detectors, e.g., decision-feedback detectors for scenarios with non-negligible inter-symbol interference. Finally, it would be interesting to optimize the \ac{CRN} parameters subject to constraints of a chemical wetlab implementation and consider further chemical downstream processes that leverage the output of the proposed detector, e.g., to perform certain actions.

\version{
\appendices
\section{Training Rule: Markov Chain-based Error Analysis}
\label{sec:appendix_training_rule}
Here, we provide an error analysis for the steady-state behavior of the adaptive learning rule \eqref{eq:update_rule} based on a \ac{DTMC} formulation.

\subsection{DTMC Formulation}\label{sec:app:dtmc}
We assume that there can be up to $\nwtotal$ weight molecules in the system. Then, the current number of weight molecules $\nwa=n$ indicates the system state. For a \ac{DTMC}, the evolution over time is given by
\vspace*{-5mm}
\begin{equation}
    \pdistr[l+1] = \pdistr[l] \transmatrix,
\end{equation}
where $\pdistr[l]=\begin{bmatrix}\pi_0[l] & \dots & \pi_{\nwtotal}[l]\end{bmatrix} \in \mathbb{R}_{\geq0}^{(\nwtotal+1)}$ with $\sum_{i=0}^{\nwtotal} \pi_i[l]=1$. Here, the $n$-th entry $\pi_n[l]$ denotes the probability that there are $n$ active weight molecules in the system after the $l$-th training step. $\transmatrix$ is the transition matrix whose entry $P_{i,j}$ denotes the probability to transition from state $i$ to $j$ if the current state is state $i$. To construct $\transmatrix$ we assume that the pilot symbols are random and equiprobable. Here, we assume that $\kon=\koff$ and therefore, the decision threshold is given by $\nwa$ (cf. proof of Theorem~\ref{th:adaptive_map}). 
If the system is currently in state $n$, it moves out of this state if $x \neq \xhat$. The probability that we estimate $x=1$ as $\xhat=0$ and thus move from state $n$ to $n-1$ is
\begin{equation}
    P_{n,n-1} =\Pr[\Xhat=0,X=1|\Nwa=n] = \Pr[X=1] \cdot \sum_{i=0}^{n-1} \Pr[\Nrb=i|X=1].\label{eq:training_markov_chain_pnn-1}
\end{equation} 
Analogously, one can construct $P_{n,n+1}$. Because the only other option is to stay in the same state, $P_{n,n}=1-P_{n,n-1}-P_{n,n+1}$.

\subsection{Steady State Analysis}\label{sec:app:steady_state}
In order to assess the performance of the adaptive learning rule \eqref{eq:update_rule}, we are interested in the steady-state behavior, i.e., the performance once the training has converged. The considered \ac{DTMC} is a so-called birth-death process. Then, because the transition probability $P_{n-1,n}$ from state $n-1$ to state $n$ and the transition probability $P_{n,n-1}$ from state $n$ to state $n-1$, are known, the steady state distribution $\pdistrsteadystate$ can be computed recursively \cite{gallager_discrete_processes}. If we define the ratio $\rho_n = \frac{P_{n-1,n}}{P_{n,n-1}}$, the distributions can be computed as:
\begin{align}
    \pi^{\infty}_0 &= \frac{1}{1 + \sum_{i=1}^{\nwtotal} \prod_{j=1}^i \rho_j }, & n=0,\\
    \pi^{\infty}_n &= \rho_n \pi_{n-1}, & n>0.
\end{align}

The expected \ac{BER} for a given state distribution $\pdistr$ is given by
\begin{equation}
    \expectation{}{\BER} = \sum_{n=1}^{\nwtotal} \pi_n \cdot \BER_n = \sum_{n=1}^{\nw} \pi_n \cdot \left(1-P_{n,n}\right),\label{eq:mc_analysis_expected_BER}
\end{equation}
where $\BER_n$ denotes the \ac{BER} if $\Nwa=n$. By inserting $\pdistrsteadystate$, we can then obtain the expected \ac{BER} for the steady-state. Similarly, the expected \ac{BER} after $l$ pilot symbols can be computed by inserting $\pdistr[0] \transmatrix^l$ into \eqref{eq:mc_analysis_expected_BER}.
Analogously, the expected number of weight molecules can be computed as
\begin{equation}
    \expectation{}{N_{\Won}} = \sum_{n=1}^{\nwtotal} \pi_n \cdot n.
\end{equation}
}{}

\renewcommand{\baselinestretch}{1.21}
\bibliographystyle{IEEEtran}
\bibliography{bare_jrnl.bbl}

\end{document}